\documentclass[format=acmsmall, review=false]{acmart}
\usepackage{acm-ec-20}
\usepackage[latin1]{inputenc}
\usepackage{amsmath}
\usepackage{amsfonts}
\usepackage{amssymb}
\usepackage{graphicx}
\usepackage{enumitem}
\usepackage{amsthm}
\usepackage{color}
\usepackage[english]{babel}
\usepackage{setspace}

\theoremstyle{definition}
\newtheorem{theorem}{Theorem}[section]
\newtheorem{lemma}[theorem]{Lemma}
\newtheorem{claim}[theorem]{Claim}

\newtheorem{corollary}[theorem]{Corollary}

\newtheorem*{theorem*}{Theorem}

\newcommand{\cl}{{\bf cl}}

\newcommand{\E}{\mathbb{E}}

\newcommand{\conv}{\text{conv}}
\newcommand{\vol}{\text{vol}}

\newcommand{\overbar}[1]{\mkern 1.5mu\overline{\mkern-1.5mu#1\mkern-1.5mu}\mkern 1.5mu}

\title{How Many Freemasons Are There?\\ The Consensus Voting Mechanism in Metric Spaces}

\author{Mashbat Suzuki}
\affiliation{\institution{McGill University}
	}
\email{mashbat.suzuki@mail.mcgill.ca}
\author{Adrian Vetta}
\affiliation{\institution{McGill University}
	}
\email{adrian.vetta@mcgill.ca}

\begin{abstract}
	We study the evolution of a social group when admission to the group is determined via {\em consensus} or {\em unanimity voting}.
	In each time period, two candidates apply for membership and a candidate is selected if and only if all the current group members agree.
	We apply the {\em spatial theory of voting} where group members and candidates are located in a metric space and
	each member votes for its closest (most similar) candidate.
	Our interest focuses on the expected cardinality of the group after $T$ time periods. 
	To evaluate this we study the geometry inherent in dynamic consensus voting over a metric space.
	This allows us to develop a set of techniques for lower bounding and upper bounding the expected cardinality of a group. 
	We specialize these methods for two-dimensional metric spaces. For the unit ball the expected cardinality 
	of the group after $T$ time periods is $\Theta(T^\frac{1}{8})$.
	In sharp contrast, for the unit square the expected cardinality is at least $\Omega(\ln T)$ but at most $O(\ln T \cdot \ln\ln T )$.
\end{abstract}

\begin{document}
	
\maketitle
	\section{Introduction}\label{sec:intro}
	This paper studies the evolution of social groups over time. In an {\em exclusive social group}, the
	existing group members vote to determine whether or not to admit a new member.
	Familiar examples include the freemasons, fraternities, membership-run sports and social clubs,
	acceptance to a condominium, as well as academia.
	To analyze the inherent dynamics we use the model of Alon, Feldman, Mansour, Oren and Tennenholtz~\cite{dyn}.
	In each time period, two candidates apply for membership and the current members vote to decide if either
	or none of them is acceptable. The spatial model of voting is used: each person is located uniformly at random in a metric space 
	and each group member votes for the candidate closest to them.
	
	Alon et al.~\cite{dyn} analyze social group dynamics in a one-dimensional metric space, specifically, the unit interval $[0,1]$.
	They examine how outcomes vary under different winner determination rules, in particular,   
	majority voting and consensus voting.
	In {\em consensus voting} or {\em unanimity voting} a candidate is elected if and only if the group members agree unanimously.
	Equivalently, every member may {\em veto} a potential candidate. 
	
	Our interest lies in the evolution of the group size under consensus voting; that is, what is the expected cardinality of the social group $G^T$ after $T$ time periods? In the one-dimensional setting the answer is quite simple. There, Alon et al.~\cite{dyn}
	show that under consensus voting {\em if} a candidate is elected in round $t$ then, with high probability, it is within a distance $\Theta(1/\sqrt{t})$ 
	of one endpoint of the interval. Because the winning candidate must be closer to all group members than the losing candidate, both candidates
	must therefore be near the endpoints. This occurs with probability $\Theta(1/t)$.
	As a consequence, in a one-dimensional metric space, the expected size of the social group after $T$ time periods is
	$\E[ |G^{T}| ] \cong  \ln T$. Here we use the notation $f \cong g$ if both $f \lesssim g$ and $g \lesssim f$, where
	$f \lesssim g$ if $f\le c\cdot g$ for some constant $c$. 
	
	Bounding the expected group size in higher-dimensional metric spaces is more complex and is the focus of this paper.
	To do this, we begin in Section~\ref{sec: geometry} by examining the geometric aspects of consensus voting in higher-dimensional 
	metric spaces. More concretely, we explain how winner determination relates to the convex hull of the group members
	and the Voronoi cells formed by the candidates.
	This geometric understanding enables us to construct, in Section~\ref{sec: MainTools}, a set of techniques, based upon {\em cap methods} 
	in probability theory, that allow for the upper bounding and lower bounding of expected group size. 
	In Sections~\ref{sec: LB} and~\ref{sec: UB}, we specialize these techniques to two-dimensions for application 
	on the fundamental special cases of the unit square and the unit ball.
	Specifically, for the unit square we show the following lower and upper bounds on expected group size. 
	\begin{theorem}
		Let the metric space be the unit square $\mathbb{H}$. Then the expected cardinality of the social group 
		after $T$ periods is bounded by \ $\ln T\ \lesssim \ \E[\, |G^{T}|\, ]  \ \lesssim \ \ln T \cdot \ln\ln T$.
	\end{theorem}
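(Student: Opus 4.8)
The plan is to write the expected cardinality as a sum of per-round acceptance probabilities and then control how quickly the group's convex hull fills the square. Writing $C_t=\conv(G^t)$ for the hull after round $t$ and letting $p(C)$ denote the probability that two candidates drawn uniformly from $\mathbb{H}$ produce a winner (equivalently, that their perpendicular bisector misses $C$), the cardinality only changes on an acceptance, so $\E[\,|G^T|\,]=\sum_{t=1}^{T}\E[\,p(C_{t-1})\,]$. From the geometric analysis of Section~\ref{sec: geometry} I would use the characterization that a pair $a,b$ is accepted exactly when its midpoint $m=(a+b)/2$ lies strictly beyond the hull in the pair's own direction $v=b-a$, i.e. $\langle m,v\rangle>\max_{c\in C}\langle c,v\rangle$. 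Thus an acceptance forces both candidates into a thin cap of $\mathbb{H}$ just outside $C$, and the whole problem reduces to tracking the rate at which $C_t$ advances toward the four edges and four corners of $\mathbb{H}$.

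For the lower bound I would isolate a single edge, say the top edge, and show that the group keeps recruiting near it at rate $\gtrsim 1/t$. Let $\eta_t=1-\max_{c\in C_t}c_y$ be the vertical gap. The midpoint condition forces any pair that adds a point within height $\delta$ of the top to have \emph{both} candidates inside the top strip of depth $O(\delta)$, an event of probability $\Theta(\delta^2)$; summing this over the first $t$ rounds and applying Markov's inequality shows that with constant probability no point has yet been accepted above height $1-c\,t^{-1/2}$, so $\eta_t\gtrsim t^{-1/2}$. Whenever the gap is this large, the same midpoint condition makes a fresh top-edge acceptance occur with probability $\gtrsim \eta_{t-1}^2\gtrsim 1/t$; this is exactly the self-regulating one-dimensional endpoint behaviour of Alon et al.~\cite{dyn} transplanted to an edge. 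Hence $\E[\,p(C_{t-1})\,]\gtrsim 1/t$ and $\E[\,|G^T|\,]\gtrsim\sum_{t\le T}1/t\cong\ln T$.

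For the upper bound I would bound $p(C)$ by the measure of candidate pairs whose midpoint escapes $C$ and estimate this with a cap-covering argument. The escape region is $\mathbb{H}\setminus C_t$; once the edges are developed it concentrates in thin boundary strips (caps of depth $\delta$ near an edge have area $\Theta(\delta)$) and in the four corners (caps of depth $\delta$ near a corner have area only $\Theta(\delta^2)$). Organising $\mathbb{H}\setminus C_t$ into caps at dyadic scales via the economic cap covering, and using that by round $t$ the hull must already have absorbed every cap of area more than roughly $1/t$ except possibly near the $O(1)$ corners, I would obtain a per-round estimate $\E[\,p(C_{t-1})\,]\lesssim (\ln\ln t)/t$, the factor $\ln\ln t$ counting the scales that survive a union bound over the history-dependence of the hull. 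Summing gives $\E[\,|G^T|\,]\lesssim\sum_{t\le T}(\ln\ln t)/t\cong\ln T\cdot\ln\ln T$.

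The hard part is the upper bound, and specifically the dependence between rounds: $C_{t-1}$ is a complicated, history-dependent random polygon, so one cannot simply assert a pointwise bound $\E[\,p(C_{t-1})\,]\lesssim 1/t$. Taming this dependence is what forces the union bound over $\Theta(\ln\ln t)$ cap scales and thereby produces the extra $\ln\ln T$ factor, and removing it would require sharp, self-consistent control of how the hull eats into the corners. The quadratically small corner caps are precisely what separates the square's logarithmic growth from the polynomial $\Theta(T^{1/8})$ growth of the ball, so pinning down the true order between $\ln T$ and $\ln T\cdot\ln\ln T$ is the central obstacle this approach leaves open.
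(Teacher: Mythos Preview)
Your lower-bound argument contains a concrete error. You assert that, once the vertical gap to the top edge is $\eta$, a ``fresh top-edge acceptance'' occurs with probability $\gtrsim\eta^{2}$, by analogy with the one-dimensional endpoint. But ``both candidates lie in the top strip of height $\eta$'' (which has probability $\eta^{2}$) is only a \emph{necessary} condition for accepting into that strip; it is far from sufficient. Given both candidates are in the strip, the perpendicular bisector must still miss the hull, and once the hull has spread horizontally---which nothing in your argument prevents---this forces the bisector to be almost horizontal, i.e.\ $|\Delta_x|=O(\eta\,|\Delta_y|)=O(\eta^{2})$. The correct order of $\Pr[Z(A)\mid\overline{A}]$ for the top strip $A$ of height $\eta$ is therefore $\eta^{4}$ (up to a logarithm), not $\eta^{2}$; indeed Lemma~\ref{lem: f_sq} with $a=\eta$, $b=1$ already gives the matching lower bound $\gtrsim\eta^{4}\ln(1/\eta)$. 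With your $\eta\sim t^{-1/2}$ this yields only a $t^{-2}$ per-round contribution, whose sum is $O(1)$. The paper's lower bound instead targets a \emph{corner}: it takes a triangular cap $A^{t}$ with both legs equal to $\tfrac{1}{4}t^{-1/4}$, so that Lemma~\ref{lem: f_sq} gives $\Pr[Z(A^{t})\mid\overline{A^{t}}]\gtrsim (t^{-1/4})^{4}=1/t$, and proves separately (Claim~\ref{claim: Q}) that with probability bounded away from zero the hull never enters $A^{t}$. The self-consistent scale is side $t^{-1/4}$, not gap $t^{-1/2}$, precisely because acceptance scales like the fourth power of the short side.

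Your upper-bound sketch is directionally reasonable but misidentifies the source of the $\ln\ln T$. The paper does not use a dyadic cap covering with a union bound over $\Theta(\ln\ln t)$ scales; it applies Corollary~\ref{cor: MainBound} with $f_{\mathbb{H}}(w_1,w_2)=a(w_1,w_2)^{4}\ln\!\big(e\,b(w_1,w_2)/a(w_1,w_2)\big)$, where $a\le b$ are the legs of the largest triangular cap fitting inside both Voronoi cells, and then proves (Lemma~\ref{lem: UBSQ}) that the level-set function satisfies $\Phi(\lambda)\lesssim\lambda\,\ln\ln(1/\lambda)$. The double logarithm comes from the integral $\int_{\lambda^{1/4}}^{1}\frac{dy}{y\,\ln(ey/\lambda^{1/4})}$, reflecting the continuum of aspect ratios $b/a$ available to a triangular cap in the square; it is an analytic feature of $\Phi$, not a combinatorial artifact of a union bound. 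Your assertion that ``by round $t$ the hull has absorbed every cap of area more than roughly $1/t$'' is likewise off: since acceptance into a cap scales like the fourth power of its short side rather than the square of its area, the relevant threshold is side $\sim t^{-1/4}$.
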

	Thus, expected group size for the two-dimensional unit square is comparable to that of the one-dimensional interval.
	Surprisingly, there is a dramatic difference in expected group size between the unit square and the unit ball.
	For the unit ball, the expected group size evolves not logarithmically but polynomially with time.
	\begin{theorem}
		Let the metric space be the unit ball $\mathbb{B}$. Then the expected cardinality of the social group 
		after $T$ periods is \ $\E[\, |G^{T}|\, ] \ \cong\ T^{\frac{1}{8}}$.
	\end{theorem}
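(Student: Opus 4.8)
The plan is to write $\E[\,|G^{T}|\,] = |G^{0}| + \sum_{t=1}^{T} p_t$, where $p_t$ is the probability that a candidate is admitted in round $t$, and to estimate each $p_t$ geometrically. By the characterization from Section~\ref{sec: geometry}, admission in round $t$ occurs exactly when the perpendicular bisector of the two candidates $c_1,c_2$ misses the hull $K_{t-1}=\conv(G^{t-1})$, the winner being the candidate on the same side as $K_{t-1}$. The concrete starting point is the identity that the bisector of $c_1,c_2$ lies at distance $\rho=\frac{\bigl|\,|c_1|^2-|c_2|^2\,\bigr|}{2\,|c_1-c_2|}$ from the centre of $\mathbb{B}$. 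Once the group is large $K_{t-1}$ contains the centre, so admission forces $\rho$ to exceed the radial extent of $K_{t-1}$ in the bisector's normal direction; equivalently the bisector cuts off an \emph{empty cap}. I would first establish the core cap estimate $\Pr[\rho>1-h]\cong h^{7/2}$, by localising both candidates in the boundary layer (radial depth $O(h)$ and angular separation $O(\sqrt{h}\,|r_1-r_2|)$) and integrating. Writing $h_t(\phi)$ for the empty-cap height of $K_{t-1}$ in boundary direction $\phi$, the same computation upgrades to the master formula $p_t \cong \int_0^{2\pi} h_t(\phi)^{7/2}\,d\phi$.

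The second step introduces the potential $A_t=\vol(\mathbb{B}\setminus K_{t-1})$, the area gap, which satisfies $A_t\cong\int_0^{2\pi}h_t(\phi)\,d\phi$ and is non-increasing since the hull only grows. Two facts drive the dynamics: each hull-extending admission fills area $\cong h_t^{3/2}$ (the area of the cut-off cap, whose height concentrates near the maximal available value), and admissions occur at rate $p_t$. In the idealised \emph{round} regime, where all $h_t(\phi)$ agree up to constants with a common value $h_t$ so that $A_t\cong h_t$ and $p_t\cong h_t^{7/2}\cong A_t^{7/2}$, these combine into the mean-field recursion $\E[A_{t-1}-A_t\mid \mathcal F_{t-1}]\cong A_t^{7/2}\cdot A_t^{3/2}=A_t^{5}$, whose solution is $A_t\cong t^{-1/4}$. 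Substituting back gives $p_t\cong t^{-7/8}$ and hence $\E[\,|G^{T}|\,]\cong\sum_{t\le T}t^{-7/8}\cong T^{1/8}$. As a consistency check, the number of hull edges is $\cong 1/\sqrt{h_t}\cong t^{1/8}$, matching the group size. It remains to convert this heuristic into rigorous matching bounds.

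\textbf{Lower bound.} Here I would avoid any roundness assumption and use convexity directly: since $x\mapsto x^{7/2}$ is convex, Jensen's inequality gives $p_t\cong\int_0^{2\pi} h_t(\phi)^{7/2}\,d\phi\gtrsim (2\pi)^{-5/2}\bigl(\int_0^{2\pi} h_t(\phi)\,d\phi\bigr)^{7/2}\cong A_t^{7/2}$. It therefore suffices to show the hull cannot fill too quickly, namely $A_t\gtrsim t^{-1/4}$ with high probability. Bounding the area filled per admission by the largest available empty cap and combining with the admission rate yields a stochastic upper bound on $A_{t-1}-A_t$; a supermartingale and concentration argument then certifies $A_t\gtrsim t^{-1/4}$, whence $\E[\,|G^{T}|\,]=\sum_t \E[p_t]\gtrsim T^{1/8}$.

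\textbf{Upper bound (the main obstacle).} The difficulty is that the convexity bound runs the wrong way: a single persistent long, thin empty cap with $h_t(\phi)$ far exceeding $t^{-1/4}$ would inflate $\int h_t^{7/2}\,d\phi$ and destroy the clean $T^{1/8}$ bound. The resolution is a self-correction principle implied by the master formula: a cap of height $H$ attracts admissions at rate $\cong H^{7/2}$, each filling area $\cong H^{3/2}$, so larger caps shrink strictly faster, and together with monotonicity of the empty region this should force the hull to \emph{round out}, pinning every cap height at $\cong t^{-1/4}$. I expect the hard part to be making this uniform control rigorous with the correct constants, since the theorem claims a matching $\Theta(T^{1/8})$ and cannot absorb stray logarithmic factors. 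Concretely, I would discretise the circle of directions into $\cong t^{1/8}$ arcs (the chord scale $\sqrt{h_t}\cong t^{-1/8}$ of a height-$t^{-1/4}$ cap), track the cap height on each arc as a coupled family of non-increasing processes, and use the faster-filling-of-larger-caps property to show $\max_\phi h_t(\phi)\lesssim t^{-1/4}$ with high probability. Feeding this back into the master formula yields $p_t\lesssim t^{-7/8}$, and summing gives $\E[\,|G^{T}|\,]\lesssim T^{1/8}$, completing the proof.
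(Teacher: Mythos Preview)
Your core cap estimate $\Pr[\delta(w_1,w_2)\le h]\cong h^{7/2}$ and the resulting heuristic $p_t\cong t^{-7/8}$ match the paper's conclusions, so the intuition is right. But your route to both bounds is substantially harder than the paper's, and for the upper bound the gap you yourself flag is the whole difficulty.

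\textbf{Upper bound.} You propose to track the random hull dynamically, discretising directions and arguing that large caps self-correct so that $\max_\phi h_t(\phi)\lesssim t^{-1/4}$ with high probability. The paper never does this. Its key observation (Theorem~\ref{thm: GeneralBnd}, via Corollary~\ref{cor:subset}) is that for \emph{any} cap $A$, monotonicity of acceptance in the hull gives
\[
\Pr[S^t\subseteq A]\ \le\ \bigl(1-\Pr[Z(\overline A)\mid A]\bigr)^t .
\]
Taking $A=H_i(w_1,w_2)$ and using Lemma~\ref{lem: BallCap} to get $\Pr[Z(H_i)\mid \overline{H_i}]\gtrsim\delta(w_1,w_2)^4$ yields $\Pr[X^{t+1}]\lesssim\int\!\!\int e^{-t\,\delta(w_1,w_2)^4}\,dw_1\,dw_2$, a purely \emph{static} integral over candidate positions with no hull history in sight. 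After that, the only remaining work is the tail bound $\Phi(\lambda)=\Pr[\delta^4\le\lambda]\lesssim\lambda^{7/8}$ (Lemma~\ref{lem: UBBall}), which is essentially your $h^{7/2}$ estimate. So the ``self-correction'' program, with its coupled family of arc-processes, is attacking a problem that the monotonicity trick dissolves in one line.

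\textbf{Lower bound.} Your plan needs $A_t\gtrsim t^{-1/4}$, i.e.\ that the hull does not fill too quickly; but bounding the filling rate from above is an upper-bound-type statement, and when the hull is far from round neither the admission rate nor the area-per-admission is controlled by $A_t$ alone, so the supermartingale is not as clean as you suggest. The paper again avoids tracking $A_t$: it packs $N(t)\cong t^{1/8}$ \emph{disjoint} circular segments of height $\cong t^{-1/4}$ around the boundary. By symmetry, either at some time $\tau$ at least half of them are already hit (so $|G^\tau|\gtrsim\tau^{1/8}$), or for $t>\tau$ each still-empty segment independently contributes acceptance probability $\gtrsim\delta(t)^4\cong 1/t$ (Lemma~\ref{lem: BallCap} plus Corollary~\ref{cor:subset}), giving $p_t\gtrsim N(t)\cdot t^{-1}\cong t^{-7/8}$ directly. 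No Jensen, no control of the global area gap, and no roundness assumption is needed.
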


	\subsection{Background and Related Work}
	Here we discuss some background on the spatial model and consensus voting.
	The spatial model of voting utilized in this paper dates back nearly a century to the celebrated work of Hotelling~\cite{harold1929stability}. 
	His objective was to study the division of a market in a duopoly when consumers are distributed over a one-dimension space,
	but he noted his work had intriguing implications for electoral systems.
	Specifically, in a two-party system there is an incentive for the political platforms of the two parties to {\em converge}.
	This was formalized in the {\em median voter theorem} of Black~\cite{Bla48}: in a one-dimensional ideological space
	the location of the median voter induces a Condorcet winner\footnote{A candidate is a {\em Condorcet winner} if, in a pairwise majority vote, it 
		beats every other candidate.}, given single-peaked voting preferences. The traditional voting assumption in a metric space
	is {\em proximity voting} where each voter supports its closest candidate; observe that 
	proximity voting gives single-peaked preferences.
	
	
	The {\em spatial model of voting} was formally developed by Downs~\cite{Dow57} in 1957, again in a one-dimensional metric space.
	Davis, Hinich, and Ordeshook~\cite{DHO70} expounded on practical necessity of moving beyond just one dimension.
	Interestingly, they observed that in two-dimensional metric spaces, a Condorcet winner is not guaranteed even with proximity voting.
	Of particular relevance here is their finding that, in dynamic elections, the order in which candidates are considered can 
	fundamentally affect the final outcome~\cite{Bla48,DHO70}.
	
	There is now a vast literation on spatial voting, especially concerning the strategic aspects of simple majority voting; see, for example, 
	the books~\cite{enelow1984spatial,arrow1990advances,merrill1999unified, Poo05, Sch07}. There has also been a vigorous debate concerning
	whether voter utility functions in spatial models should be distance-based (such as the standard assumption of proximity voting used here), 
	relational (e.g. directional voting~\cite{RS89}), or combinations thereof~\cite{merrill1999unified}.
	This debate has been philosophical, theoretical and experimental~\cite{Gro73, Mat79, LK00, TV08, Cla09, LP10}.
	Recently there has also been a large amount of interest in the spatial model by the artificial intelligence 
	community~\cite{ABJ15, FFG16, AJ17, SE17, BLS19}.
	It is interesting to juxtapose these modern potential applications with the original motivations 
	suggested by Black~\cite{Bla48}, such as the administration of colonies!
	
	Consensus is one of the oldest group decision-making procedures. In addition to exclusive social groups, it is familiar
	in a range of disparate settings, including judicial verdicts, Japanese corporate governance \cite{vogel1975modern}, and 
	even decision making in religious groups, such as the Quakers \cite{hare1973group}. From a theoretic perspective, 
	consensus voting in a metric spaces has also been studied by Colomer~\cite{colomer1999geometry} who highlights 
	the importance the initial set of voters can have on outcomes in a dynamic setting.
	

	\section{The Geometry of Consensus Voting}\label{sec: geometry}
	In this section, we present a simple geometric interpretation of a single election using consensus voting in the spatial model.
	In the subsequent sections, we will apply this understanding, developed for the static
	case, to study the dynamic model. Specifically, we examine how a group grows over time
	when admission to the group is via a sequence of consensus elections.
	
	Let $G^0=\{v_1,\cdots, v_k\}$ denote the initial set of group members\footnote{We may take the cardinality of the initial group to be any 
	constant $k$. In particular, we may assume $k=1$.}, selected uniformly and independently from
	a metric space $K$. 
	In the consensus voting mechanism, for each round $t\ge 1$, a finite set of candidates $C^t=\{w_1,\cdots, w_n\}\subseteq K$ applies 
	for membership. Members of the group at the start of round $t$, denoted $G^{t-1}$, are eligible to vote. 
	Assuming the {\em spatial theory of voting}, each group member will vote for the candidate who is closest to her in the metric space. 
	That is, member $v_i$ votes for candidate $w_j$ if and only if $d(v_i,w_j)\leq d(v_i,w_k)$ for every candidate $w_k\neq w_j$. 
	Under the {\em consensus (unanimity) voting rule}, if {\em every} group member selects the candidate $w_j\in C^t$ then $w_j$ is 
	accepted to the group and $G^t=G^{t-1}\cup w_j$; otherwise, if the group does not vote unanimously then
	no candidate wins selection and $G^t=G^{t-1}$. 
		
	As stated, to study how group size evolves over time, our first task is to develop a more precise understanding 
	of when a candidate will be selected under consensus voting in a single election. Fortunately, there is a nice 
	geometric characterization for this property in terms of the \textit{Voronoi cells} (regions) formed in the metric 
	space $K$ by the candidates (points) $C=\{w_1,\cdots w_n\}$.
	Specifically, the Voronoi cell $H_i$ associated with point $w_i$ is 
	$H_i:=\{v\in K \ | \ d(v,w_i)\leq d(v,w_j) \ \text{for all } i\neq j\}$.
	We will see that the convex hull of the group members $G\subseteq K$, which we denote $S=conv(G)$, 
	plays an important role in winner determination. 
	The characterization theorem for the property that a candidate is selected under the consensus voting mechanism is then:
	\begin{theorem}\label{thm:voronoi}
		Let $C=\{ w_1,w_2,\dots, w_n\}$ be the candidates and let $H_1,H_2,\dots,H_n$ be the Voronoi cells on $K$ 
		generated by $C$. Then there is a winner under consensus voting {\em if and only if}
		$S=conv(G)\subseteq H_i$ for some candidate $w_i$.
	\end{theorem}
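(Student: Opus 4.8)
The plan is to reduce the theorem to the single fact that Voronoi cells are convex, after first restating consensus victory purely in terms of membership in one Voronoi cell. First I would unwind the definitions: candidate $w_i$ is the consensus winner exactly when every group member casts its vote for $w_i$, and by the proximity-voting rule a member $v$ votes for $w_i$ precisely when $d(v,w_i)\le d(v,w_j)$ for all $j\neq i$, i.e. when $v\in H_i$. Hence ``there is a winner'' is equivalent to the statement that $G\subseteq H_i$ for some $i$. The theorem therefore asserts the equivalence $G\subseteq H_i \iff \conv(G)\subseteq H_i$.

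One direction is immediate: since $G\subseteq\conv(G)$, the containment $\conv(G)\subseteq H_i$ forces $G\subseteq H_i$, so a winner exists. For the converse I would appeal to convexity. Because $\conv(G)$ is by definition the smallest convex set containing $G$, it suffices to show that each Voronoi cell $H_i$ is convex; then $G\subseteq H_i$ immediately yields $\conv(G)\subseteq H_i$, and we are done.

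So the crux, and the only step requiring real content, is the convexity of $H_i$. I would write $H_i=\bigcap_{j\neq i}\{v : d(v,w_i)\le d(v,w_j)\}$ and argue that each set in this intersection is a halfspace. Squaring and expanding $\|v-w_i\|^2\le\|v-w_j\|^2$ in the Euclidean norm cancels the $\|v\|^2$ term and leaves the linear inequality $2\,v\cdot(w_j-w_i)\le\|w_j\|^2-\|w_i\|^2$; its solution set is exactly the closed halfspace bounded by the perpendicular bisector of $w_i$ and $w_j$ that contains $w_i$. An intersection of convex sets is convex, so $H_i$ is convex, which closes the argument.

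The main obstacle, and really the only subtlety, is that this reasoning uses the Euclidean (inner-product) structure: in a general metric space the bisector region $\{v : d(v,w_i)\le d(v,w_j)\}$ need not be convex, so Voronoi cells need not be convex and the equivalence can fail. I would therefore make explicit that $K$ is taken to be a Euclidean region (consistent with the unit square and unit ball analyzed later) and confine the halfspace computation to that setting; the two set-theoretic containments above then hold verbatim in any space, and only the convexity step depends on the norm.
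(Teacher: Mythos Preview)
Your proof is correct and follows essentially the same route as the paper: unpack the voting rule to get $G\subseteq H_i$, then pass to $\conv(G)\subseteq H_i$ via convexity of the Voronoi cell. If anything, you are more careful than the paper, which asserts ``$G\subseteq S\subseteq H_i$'' without explicitly invoking convexity (it is used explicitly only later, in Lemma~\ref{lem: BallIntersect}); your remark that the halfspace decomposition relies on the Euclidean structure is also a worthwhile caveat the paper leaves implicit.
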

	
	\begin{proof} Assume $S=\subseteq H_i$ for some candidate $w_i\in C$. Then, for every voter $v_j \in G$,  we have 
		$d(v_j,w_i)\leq d(v_j,w_k)$ for any other candidate $w_k\in C$. 
		Hence, every voter prefers candidate $w_i$ over all the other candidates. Thus candidate $w_i$ is selected.
		Conversely, assume that candidate $w_i$ is selected. Then, by definition of consensus voting, each voter $v_j \in G$ 
		voted for $w_i$. Thus $d(v_j,w_i)\leq d(v_j,w_k)$ for all $k\neq i$. Ergo, $G\subseteq S\subseteq H_i$.	
	\end{proof}
	
	Several useful facts can be derived from this characterization. These facts are stated in the subsequent corollary and lemma. 
	\begin{corollary}\label{cor:subset} Let $C=\{ w_1,w_2,\dots, w_n\}$ be set of candidates. If there is a candidate 
		accepted with $S=A$, then the same candidate is also accepted  with $S=B$ for any convex set $B\subseteq A$. 
	\end{corollary}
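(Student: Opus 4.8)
The plan is to derive this directly from the characterization in Theorem~\ref{thm:voronoi}, exploiting the fact that the Voronoi cells $H_1,\dots,H_n$ are determined \emph{solely} by the candidate set $C$ and are therefore identical across both scenarios. Since $C$ is held fixed in the statement, the partition of $K$ into cells $H_i$ does not depend on whether the convex hull of the group is $A$ or $B$; this observation is what makes the corollary essentially immediate.

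First I would invoke the forward (``only if'') direction of Theorem~\ref{thm:voronoi} applied to the configuration with $S=A$. By hypothesis some candidate $w_i\in C$ is accepted, so the theorem gives $A\subseteq H_i$. Next, since $B\subseteq A$ by assumption, transitivity of set inclusion yields $B\subseteq A\subseteq H_i$, so $B\subseteq H_i$ for the \emph{same} index $i$. Here I would explicitly note that $H_i$ is unchanged because it is defined purely in terms of the fixed points $w_1,\dots,w_n$.

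Finally I would apply the reverse (``if'') direction of Theorem~\ref{thm:voronoi} to the configuration with $S=B$: the containment $B\subseteq H_i$ for a convex set $B$ certifies that candidate $w_i$ is accepted when $S=B$, which is exactly the claim. The only point requiring any care is the implicit hypothesis that $B$ be convex, which is needed so that $B$ legitimately plays the role of a convex hull $\conv(G')$ to which Theorem~\ref{thm:voronoi} can be applied; this is supplied by the statement, so there is no genuine obstacle and the argument is a two-line consequence of the characterization.
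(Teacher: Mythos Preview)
Your proposal is correct and follows essentially the same argument as the paper: invoke Theorem~\ref{thm:voronoi} to get $A\subseteq H_i$, note that the Voronoi cells depend only on the fixed candidate set $C$, and conclude $B\subseteq A\subseteq H_i$ so the same candidate is accepted with $S=B$. The paper's version is slightly terser but the reasoning is identical.
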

	\begin{proof}
	By Theorem~\ref{thm:voronoi}, a candidate is accepted if and only if the current convex hull is 
	entirely contained within one of Voronoi regions, $H_1, H_2,\dots, H_n$, generated by the candidates.
	Clearly, if $A \subseteq H_i$ then  $B \subseteq H_i$. Therefore, if there is an 
	acceptance with $S=A$ then there would be an acceptance with $S=B$. The result follows immediately.
\end{proof}
	The next lemma requires the following definition: let $B(v, w)$ denote the 
	{\em Euclidean ball} centred at $v$ with radius $\|w-v\|$. Furthermore, we denote by~$\partial S$ 
	the set of vertices (extreme points) of the convex hull~$S$ of the group members. Observe that $\conv(\partial S) = S$ and that $\partial S \subseteq G$.

	\begin{lemma}\label{lem: BallIntersect}	
		Let $G$ be current set of group members and $S$ be its convex hull. Let $C=\{w_1, w_2, \cdots, w_n  \}$ be the current candidates. 
		Under consensus, there is a winning candidate if and only if  $\exists w_i\in C$ such that 
		$$w_i\in \bigcap\limits_{k\in  [n]\setminus i}\, \bigcap\limits_{v_j\in \partial S} B(v_j, w_k)$$
	\end{lemma}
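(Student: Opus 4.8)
The plan is to reformulate the geometric condition $S \subseteq H_i$ from Theorem~\ref{thm:voronoi} in terms of balls, and the key observation is that each Voronoi cell $H_i$ is exactly the intersection of half-spaces determined by the pairwise bisectors of $w_i$ with the other candidates. First I would unpack the membership condition $S \subseteq H_i$. By the remark preceding the lemma, since $S = \conv(\partial S)$ and each $H_i$ is convex (being an intersection of half-spaces), we have $S \subseteq H_i$ if and only if $\partial S \subseteq H_i$, i.e. every extreme point $v_j$ of the convex hull lies in $H_i$. Thus the quantifier over all of $G$ in Theorem~\ref{thm:voronoi} can be reduced to a quantifier over the vertex set $\partial S$, which is what produces the inner intersection $\bigcap_{v_j \in \partial S}$ in the statement.

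Next I would translate the single Voronoi-membership condition $v_j \in H_i$ into a ball-containment statement. By definition, $v_j \in H_i$ means $d(v_j, w_i) \le d(v_j, w_k)$ for all $k \neq i$, i.e. $\|w_i - v_j\| \le \|w_k - v_j\|$. The crucial reinterpretation is to read this inequality as a statement about which point lies inside which ball centred at $v_j$: the condition $\|w_i - v_j\| \le \|w_k - v_j\|$ says precisely that $w_i$ lies inside the ball $B(v_j, w_k)$ of radius $\|w_k - v_j\|$ centred at $v_j$. Holding $v_j$ fixed and ranging over all competing candidates $w_k$ with $k \neq i$ then gives $v_j \in H_i \iff w_i \in \bigcap_{k \neq i} B(v_j, w_k)$.

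Combining the two reductions finishes the argument: there is a winner if and only if there exists $w_i$ with $v_j \in H_i$ for every $v_j \in \partial S$, and by the ball reinterpretation this is equivalent to $w_i \in \bigcap_{k \neq i} B(v_j, w_k)$ for every $v_j \in \partial S$. Intersecting over both indices yields exactly $w_i \in \bigcap_{k \in [n]\setminus i} \bigcap_{v_j \in \partial S} B(v_j, w_k)$, as claimed.

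I do not anticipate a serious obstacle here — the result is essentially a restatement of Theorem~\ref{thm:voronoi} and Corollary~\ref{cor:subset} after dualizing the roles of points and radii. The one step requiring a little care is justifying the reduction from all of $G$ (or all of $S$) to the extreme points $\partial S$: this relies on the convexity of each Voronoi cell, which holds because each $H_i$ is an intersection of the half-spaces $\{v : \|v - w_i\| \le \|v - w_k\|\}$, together with the fact that a convex set contains $S = \conv(\partial S)$ iff it contains every vertex in $\partial S$. Once that is noted, the equivalence between the distance inequality and ball membership is immediate from the definition of $B(v,w)$.
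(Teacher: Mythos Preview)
Your proposal is correct and follows essentially the same route as the paper: both arguments reduce the condition $S\subseteq H_i$ to $\partial S\subseteq H_i$ via convexity of the Voronoi cells and $S=\conv(\partial S)$, and then reinterpret the distance inequalities $d(v_j,w_i)\le d(v_j,w_k)$ as the ball-membership condition $w_i\in B(v_j,w_k)$. The only cosmetic difference is that the paper handles the forward direction by first passing through all of $G$ and then restricting to $\partial S\subseteq G$, whereas you treat both directions symmetrically via the single equivalence $S\subseteq H_i\iff \partial S\subseteq H_i$.
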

	\begin{proof}[Proof of Lemma~\ref{lem: BallIntersect}]
		If there is a consensus then there is a $w_i\in C$ who is selected. Hence among all candidates, $w_i$ is closest to each group member.  
		That is, $d(v_j,w_i)\leq d(v_j,w_k)$, for each group member $v_j\in G$ and each candidate $w_k\in C\setminus w_i$. It follows that 
		$$w_i\in \bigcap\limits_{k\in  [n]\setminus i}\, \bigcap\limits_{v_j\in G} B(v_j, w_k)  
		\subseteq \bigcap\limits_{k\in  [n]\setminus i} \, \bigcap\limits_{v_j\in \partial S} B(v_j, w_k) $$ 
		where the last inclusion holds since $\partial S\subseteq G$.
		Conversely, suppose there exists a $w_i$ such that $w_i\in\bigcap\limits_{k\in  [n]\setminus i}\bigcap\limits_{v_j\in \partial S} B(v_j, w_k) $. 
		Thus, $w_i$ satisfies    $d(v_j,w_i)\leq d(v_j,w_k)$ for each voter $v_j\in \partial S$ and $w_k\in C \setminus w_i$. Hence we 
		have $\partial S\subseteq H_i$, which implies $\conv(\partial S)=S \subseteq \conv(H_i)=H_i $ as $H_i$ is convex. 
		Therefore $G\subseteq S\subseteq H_i$. Thus, by Theorem \ref{thm:voronoi}, candidate 
		$w_i$ wins under consensus voting. 
	\end{proof}

	Following Alon et al.~\cite{dyn}, from now on we restrict attention to case of $n=2$ candidates in each round.
The case $n\ge 3$ is not conceptually harder and the ideas presented in this paper do extend to that setting, but mathematically 
the analyses would be even more involved than those that follow.

	\section{General Tools for Bounding Expected Group Size}\label{sec: MainTools}
	
	In this section we introduce a general approach for obtaining both upper and lower bounds 
	on the expected cardinality of the social group in round $t$. These techniques apply for consensus voting in any convex compact domain $K$. In the rest of the paper we will specialize these methods 
	for the cases in which $K$ is either a unit ball or a unit square. In particular, lower bounds are provided 
	for these two domains in Section~\ref{sec: LB} and upper bounds in Section~\ref{sec: UB}.
	
	Let $K$ be a convex compact set,  and let $C^t=\{w_1,w_2\}$ be candidates appearing in round $t$. 
	We assume each candidate $w_i$ is distributed uniformly on $K$. We may also assume that $\vol(K)=1$, 
	as otherwise we can absorb the associated constant factor into our bounds. Note that the expected group size is $\E[|G^T|]=\sum_{t=1}^T \Pr[X^t]$, where $X^t$ denotes the event a new candidate wins in round $t$.   
	
	Let's first present the intuition behind our approach to upper bounding the probability of selecting a 
	candidate in any round. Recall that, by Theorem~\ref{thm:voronoi}, given two candidates $\{w_1,w_2\}$ in round $t+1$, 
	we accept candidate $i$ if and only if $S^t\subseteq H_i(w_1,w_2)$. Now in order for the convex hull to satisfy 
	$S^t\subseteq H_i(w_1,w_2)$, it must be the case that in the previous round (i) $S^{t-1}\subseteq H_i(w_1,w_2) $, and (ii) a 
	new candidate did not get accepted inside the complement $\overbar{H_i(w_1,w_2)}= \cl(K\setminus H_i(w_1,w_2))$, where $\cl$~denotes set closure. 
	Applying this argument recursively with respect to the worst case convex hulls for accepting candidates inside $\overbar{H_i(w_1,w_2)}$, we will 
	obtain an upper bound on the probability of accepting a candidate. Such worst case convex hulls can be found by appropriately applying Corollary~\ref{cor:subset}. 
	
	To formalize this intuition, we require some more notation. We denote by $Z(H)$ the event that a new candidate is selected inside $H$. Let $ \Pr[Z(H)|S=A ]$ denote the probability of selecting a candidate inside $H$ given the convex hull of the group members is $A$.
	The shorthand $ \Pr[Z(H)|A]=\Pr[Z(H)|S=A]$ will be used when the context is clear. Note, by Lemma~\ref{lem: BallIntersect}, 
	the probability of acceptance depends only on the shape of the convex hull of the members, and not on the round. 
	That is, if $S^t=S^{\hat{t}}=A$ for two rounds $t\neq \hat{t}$ then the probabilities of accepting a candidate inside a 
	given region in the rounds $t+1$ and $\hat{t}+1$ are exactly the same. 
	
	We say a set $A$ is a {\em cap} if there exists a half space $W$ such that  $A=K\cap W$. We remark that caps have been 
	widely used for studying the convex hull of random points; see the survey article~\cite{baddeley2007random} and the references therein. 
	Of particular relevance here is that, in the case of two candidates, the Voronoi regions for the candidates are caps.  
	Furthermore, $\overbar{H_1(w_1,w_2)}=H_2(w_1,w_2)$ and vice versa. 
	
	\begin{theorem}\label{thm: GeneralBnd}
		Let $K$ be convex compact domain and let $f_K(w_1,w_2)$ be any function which 
		satisfies $f_K(w_1,w_2)\leq \min\limits_{i}\Pr[Z(H_i(w_1,w_2))\ | \ \overbar{H_i(w_1,w_2)}]$. Then 
		$$\Pr[X^{t+1}]\ \lesssim \ \int_K \int_K e^{-tf_K(w_1,w_2)} \, dw_1\, dw_2$$
	\end{theorem}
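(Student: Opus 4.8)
The plan is to condition on the two candidates $w_1,w_2$ that appear in round $t+1$ and to bound, for each fixed pair, the conditional probability of an acceptance. Fixing $w_1,w_2$ fixes the two Voronoi caps $H_1=H_1(w_1,w_2)$ and $H_2=H_2(w_1,w_2)=\overbar{H_1}$, and, crucially, these candidates are drawn independently of the entire history of rounds $1,\dots,t$, so the convex hull $S^t$ is independent of the pair $(w_1,w_2)$. By Theorem~\ref{thm:voronoi}, conditioned on $(w_1,w_2)$ the acceptance event $X^{t+1}$ is exactly $\{S^t\subseteq H_1\}\cup\{S^t\subseteq H_2\}$, so a union bound gives
\[
\Pr[X^{t+1}\mid w_1,w_2]\ \le\ \Pr[S^t\subseteq H_1\mid w_1,w_2]+\Pr[S^t\subseteq H_2\mid w_1,w_2].
\]
It then suffices to show each term is at most $e^{-t f_K(w_1,w_2)}$ and to integrate over $w_1,w_2$ (recalling $\vol(K)=1$), absorbing the resulting constant factor into $\lesssim$.

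To bound $\Pr[S^t\subseteq H_1\mid w_1,w_2]$ I would unroll the recursion sketched before the theorem. Let $E_s$ denote the event $S^s\subseteq H_1$ and let $\mathcal{F}_s$ be the history through round $s$. Since members are only ever added, the convex hull only grows, so the events are nested, $E_t\subseteq E_{t-1}\subseteq\cdots\subseteq E_0$, and $S^t\subseteq H_1$ forces $S^s\subseteq H_1$ for every $s\le t$. The heart of the argument is the one-step estimate $\Pr[E_s\mid\mathcal{F}_{s-1}]\le (1-f_K(w_1,w_2))\,\mathbf{1}_{E_{s-1}}$: on $E_{s-1}^c$ the hull has already escaped $H_1$ and can never return, while on $E_{s-1}$ we have $S^{s-1}\subseteq H_1$ and the hull stays inside $H_1$ during round $s$ precisely when no candidate is accepted in the complementary cap $\overbar{H_1}=H_2$.

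The key step is to lower bound this per-round escape probability uniformly in the (random) shape of $S^{s-1}$. Here I would invoke Corollary~\ref{cor:subset}: since $S^{s-1}\subseteq H_1$ and $H_1$ is itself convex, $H_1$ is the \emph{worst-case} (largest) hull consistent with $E_{s-1}$, and a larger hull can only make acceptance less likely, so
\[
\Pr[Z(H_2)\mid S^{s-1}]\ \ge\ \Pr[Z(H_2)\mid H_1]\ =\ \Pr[Z(H_2)\mid \overbar{H_2}]\ \ge\ f_K(w_1,w_2),
\]
the last inequality being the defining property of $f_K$. (That the acceptance probability depends only on the shape of the hull, and not on the round index, is exactly the shape-invariance noted after Lemma~\ref{lem: BallIntersect}, so the same $f_K$ applies at every round.) Consequently $\Pr[E_s\mid\mathcal{F}_{s-1}]\le 1-f_K(w_1,w_2)$ on $E_{s-1}$, and taking expectations yields $\Pr[E_s]\le (1-f_K(w_1,w_2))\,\Pr[E_{s-1}]$. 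Iterating from $\Pr[E_0]\le 1$ gives $\Pr[S^t\subseteq H_1\mid w_1,w_2]\le (1-f_K)^t\le e^{-t f_K(w_1,w_2)}$, and symmetrically for $H_2$.

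I expect the main obstacle to be the careful probabilistic bookkeeping in the recursion rather than any single hard inequality: one must argue that the random intermediate hulls $S^{s-1}$ can all be replaced by the fixed cap $H_1$ without loss, and that conditioning and independence across rounds are handled correctly via the filtration $\mathcal{F}_s$. Corollary~\ref{cor:subset} is precisely the monotonicity that decouples the rounds and converts a product of correlated per-round factors into the clean bound $(1-f_K)^t$; verifying its applicability at each step, namely that $S^{s-1}\subseteq H_1$ really does dominate and that $\overbar{H_1}=H_2$ so the escape event is an acceptance inside a cap, is the delicate part.
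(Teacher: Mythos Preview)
Your proposal is correct and follows essentially the same approach as the paper: condition on the candidate pair, use Theorem~\ref{thm:voronoi} to reduce acceptance to $\{S^t\subseteq H_i\}$, then recursively bound $\Pr[S^t\subseteq H_i]$ by invoking Corollary~\ref{cor:subset} to replace the random hull $S^{s-1}$ by the worst-case cap $H_i=\overbar{H_{3-i}}$, obtaining $(1-f_K)^t\le e^{-t f_K}$. The paper's proof is simply a terser presentation of the same recursion (its inequality~(\ref{eq: bndCap})), without the explicit filtration language.
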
	 
	\begin{proof}
		Observe that, for any cap $A$, we have the following inequality:
		\begin{align}
		\Pr[S^t \subseteq  A ] \ &=\ \left(1-\Pr[Z(\overbar{A})\ | \ S^{t-1} \subseteq A] \right) \cdot \Pr[S^{t-1}\subseteq A] \nonumber \\ 
		&\ \leq\ \left(1-\Pr[Z(\overbar{A})\ | \ S^{t-1}=A  ] \right) \cdot \Pr[S^{t-1}\subseteq A] \nonumber\\
		&\ \leq\ \left(1-\Pr[Z(\overbar{A})\ | \ A \   ] \right)^t  \label{eq: bndCap}
		\end{align}
		Here the first inequality follows from Corollary~\ref{cor:subset}. The second inequality is 
		obtained by repeating the argument inductively for $S^{t-1}$. Hence:
		\begin{align*}
		\Pr[X^{t+1}]
		&\ = \  \int\int \Pr[X^{t+1}  \ | \ (w_1,w_2) \ \text{are candidates}]  \, dw_1\,  dw_2  \\ 
		&\ =\  \int \int \big( \Pr[S^t\subseteq H_1(w_1,w_2)]+\Pr[S^t\subseteq H_2(w_1,w_2)] \big)  \,dw_1 \,dw_2 \\  
		&\ \leq\ \sum_{i=1}^2\, \int \int \big( 1-\Pr[Z(H_i(w_1,w_2)) \ | \ \overbar{H_i(w_1,w_2)} \ ]\big)^t\,dw_1 \,dw_2 \\ 
		&\ \leq\ 2\int \int (1-f_K(w_1,w_2))^t  \,dw_1 \,dw_2 \\ 
		&\ \leq\ 2\int \int  e^{-tf_K(w_1,w_2)}  \,dw_1 \,dw_2
		\end{align*}	
		The second equality holds by Theorem~\ref{thm:voronoi}. The first inequality follows by 
		combining inequality~(\ref{eq: bndCap}) and the facts $\overbar{H_1(w_1,w_2)}=H_2(w_1,w_2)$ and 
		$\overbar{H_2(w_1,w_2)}=H_1(w_1,w_2)$. Next, by assumption, 
		$f_K(w_1,w_2)\leq \Pr[Z(H_i(w_1,w_2))\ | \ \overbar{H_i(w_1,w_2)}]$ for each $i$; the last two inequalities follow immediately.	
	\end{proof}	
	As stated, Theorem~\ref{thm: GeneralBnd} allows us to upper bound the expected group size. 
	However, the theorem is not easily applicable on its own. To rectify this, consider the following easier to apply corollary.
	\begin{corollary}\label{cor: MainBound}  
		Let $K$ be compact space. If $f_K$ satisfies the conditions of 
		Theorem~\ref{thm: GeneralBnd} then
		$$\Pr[X^{t+1}]  \ \lesssim\  \frac{1}{t}+\int_0^{\frac{\ln(t)}{t}} te^{-t\lambda} \cdot \Phi(\lambda)\, d\lambda $$	
		where $\Phi(\lambda)= \int \int \mathbb{I}\left[f_K(w_1,w_2)\leq \lambda\right] dw_1 dw_2$.
	\end{corollary}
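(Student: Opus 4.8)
The plan is to start from the estimate $\Pr[X^{t+1}] \lesssim \int_K \int_K e^{-t f_K(w_1,w_2)}\,dw_1\,dw_2$ furnished by Theorem~\ref{thm: GeneralBnd} and to reinterpret this double integral probabilistically. Because $\vol(K)=1$ and the two candidates are drawn independently and uniformly from $K$, the pair $(w_1,w_2)$ has joint density $1$ on $K\times K$. Writing $Y := f_K(w_1,w_2)$ for the induced random variable, the integral is therefore exactly $\E[e^{-tY}]$, while the quantity $\Phi(\lambda)=\int\int \mathbb{I}[f_K(w_1,w_2)\le\lambda]\,dw_1\,dw_2$ is precisely the cumulative distribution function $\Pr[Y\le\lambda]$. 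I would first note that we may assume $f_K\ge 0$ without loss of generality: replacing $f_K$ by $\max(f_K,0)$ still satisfies the hypothesis of Theorem~\ref{thm: GeneralBnd} (since $\min_i \Pr[Z(H_i)\mid \overbar{H_i}]\ge 0$) and only decreases the integrand, so the upper bound is unaffected.

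Next I would convert $\E[e^{-tY}]$ into an integral against the distribution $\Phi$ using the layer-cake identity $e^{-tY}=\int_Y^\infty t e^{-t\lambda}\,d\lambda$, which is valid for $Y\ge 0$. Taking expectations and applying Tonelli's theorem to exchange expectation and integration (the integrand is nonnegative) yields the clean equality
$$\E[e^{-tY}] \;=\; \int_0^\infty t e^{-t\lambda}\,\Pr[Y\le\lambda]\,d\lambda \;=\; \int_0^\infty t e^{-t\lambda}\,\Phi(\lambda)\,d\lambda .$$
I prefer this route over Stieltjes integration by parts precisely because it sidesteps any boundary term or atom at $\lambda=0$ and produces an exact identity.

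The final step is to split this integral at the threshold $\lambda=\frac{\ln t}{t}$. The lower piece $\int_0^{\ln t/t} t e^{-t\lambda}\Phi(\lambda)\,d\lambda$ is exactly the term appearing in the statement, so nothing further is needed there. For the upper (tail) piece I would bound $\Phi(\lambda)\le 1$ crudely, which gives $\int_{\ln t/t}^{\infty} t e^{-t\lambda}\,d\lambda = e^{-t\cdot(\ln t/t)} = \frac{1}{t}$. Adding the two pieces and chaining with Theorem~\ref{thm: GeneralBnd} produces the claimed bound $\Pr[X^{t+1}]\lesssim \frac1t + \int_0^{\ln t/t} t e^{-t\lambda}\Phi(\lambda)\,d\lambda$. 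I do not expect a genuine obstacle here; the only real content is recognizing the double integral as a Laplace-transform-type expectation and choosing the split point so that the tail contribution collapses to exactly $1/t$. The choice $\ln t/t$ is the natural one: it is the smallest threshold for which the exponential factor $e^{-t\lambda}$ already drives the (trivially bounded) tail down to the target order $1/t$, while still leaving the informative small-$\lambda$ behaviour of $\Phi$ isolated in the remaining integral for later, domain-specific analysis.
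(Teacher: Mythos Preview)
Your argument is correct. Both your proof and the paper's follow the same skeleton: invoke Theorem~\ref{thm: GeneralBnd}, rewrite the resulting double integral in terms of the distribution function $\Phi$ of $Y=f_K(w_1,w_2)$, split at the threshold $\ln t/t$, and bound the tail contribution by $1/t$. The difference is purely in how the rewriting is carried out. The paper first splits the expectation with the indicator $\mathbb{I}[f_K\le \ln t/t]$, interprets the truncated piece as a Riemann--Stieltjes integral $\int_0^{\ln t/t} e^{-t\lambda}\,dF_Y(\lambda)$, appeals to absolute continuity of $F_Y$ to get a density, and then integrates by parts, discarding the boundary term via $\Phi(0)=0$ and $\Phi\le 1$. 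You instead apply the layer-cake identity $e^{-tY}=\int_Y^\infty t e^{-t\lambda}\,d\lambda$ and Tonelli to obtain the exact formula $\E[e^{-tY}]=\int_0^\infty t e^{-t\lambda}\Phi(\lambda)\,d\lambda$ before splitting. Your route is a bit cleaner: it requires no regularity of $\Phi$ (no density, no worry about an atom at $0$) and produces no boundary term to estimate. The paper's route, on the other hand, makes the $1/t$ tail contribution transparent from the outset by bounding $e^{-t f_K}\le 1/t$ directly on the set $\{f_K\ge \ln t/t\}$.
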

	\begin{proof}[Proof of Corollary~\ref{cor: MainBound}]  
		From Theorem~\ref{thm: GeneralBnd}, we have that,
		\begin{align}
		\Pr[X^{t+1}] &\ \lesssim\  \int \int \mathbb{I}\left[f_K(w_1,w_2)
		\leq \frac{\ln(t)}{t}\right] \cdot e^{-tf_K(w_1,w_2)} \, dw_1\, dw_2 \nonumber\\ 
		&\quad\quad\quad +\ \int \int \mathbb{I}\left[f_K(w_1,w_2)\geq \frac{\ln(t)}{t}\right] \cdot e^{-tf_K(w_1,w_2)} \, dw_1\, dw_2  \nonumber\\
		&\ \leq\  \int \int \mathbb{I}\left[f_K(w_1,w_2)\leq \frac{\ln(t)}{t}\right] \cdot e^{-tf_K(w_1,w_2)} \, dw_1 \,dw_2 \ +\  \frac{1}{t}~\label{ieq: tiq}
		\end{align}
		Next, consider the random variable $Y=f_K(w_1,w_2)$ and denote its {\em cumulative distribution function} by
		$F_Y(\lambda)=\Pr[Y\leq \lambda]=\Phi(\lambda)$. Then 
		\begin{align}
		\int \int \mathbb{I}\left[f_K(w_1,w_2)\leq \frac{\ln(t)}{t}\right] \cdot e^{-tf_K(w_1,w_2)} \, dw_1 \,dw_2   
			&\ =\  \E \left[\mathbb{I}\left[f_K(w_1,w_2)\leq \frac{\ln(t)}{t}\right] \cdot e^{-tf_K(w_1,w_2)}\right] \nonumber \\ 
			&\ =\ \E_Y\left[ \E \left[\mathbb{I}\left[Y\leq \frac{\ln(t)}{t}\right] \cdot e^{-tY}\ | \ Y \right]  \right] \nonumber  \\ 
			&\ =\ \int_0^{1}  \mathbb{I}\left[\lambda\leq \frac{\ln(t)}{t}\right] \cdot e^{-t\lambda}  \ dF_Y(\lambda)  \label{ieq: biq} 
		\end{align}
		The second equality is due to the {\em law of total expectation}. Now, because $F_Y$ is absolutely continuous it has a density function.  
		Combining~(\ref{ieq: tiq}) and (\ref{ieq: biq}), we then have that 
		\begin{align*}
		\Pr[X^{t+1}] \ &\lesssim\   \frac{1}{t}+ \int_0^{\frac{\ln(t)}{t}} e^{-t\lambda}\cdot  \frac{d}{d\lambda}\, \Phi(\lambda)\, d\lambda  \\ 
		&\ =\ \frac{1}{t}+\left[e^{-{t\lambda}}\Phi(\lambda)\right]_0^{\frac{\ln(t)}{t}}+\int_0^{\frac{\ln(t)}{t}} te^{-t\lambda} \cdot \Phi(\lambda)\, d\lambda \\ 
		&\ \lesssim\  \frac{1}{t} + \int_0^{\frac{\ln(t)}{t}} te^{-t\lambda} \cdot \Phi(\lambda)\, d\lambda
		\end{align*}
		Here the last inequality was obtained by noting $\Phi(0)=0$ and $\Phi(\lambda)\leq 1$.
	\end{proof}
	
	As alluded to earlier, when obtaining upper bounds for the unit ball and the unit square, we will apply Corollary~\ref{cor: MainBound}.
	Of course, in order to do this, we must find an appropriate function $f_K(w_1,w_2)$ which lower bounds the probability of 
	acceptance inside a Voronoi region $H_i(w_1,w_2)$, given the current convex hull is $\overbar{H_i(w_1,w_2)}$. 
	Finding such a function $f_K$ can require some ingenuity, but Lemma~\ref{lem: BndCap} below will be useful in assisting in 
	this task. Moreover, as we will see in Section~\ref{sec: LB}, 
	this lemma can be used to obtain lower bounds as well as upper bounds on the expected cardinality of the group.

	\begin{lemma}\label{lem: BndCap}
		Given a two-dimensional convex compact domain~$K$ and a cap~$A$. If $z_1$ and $z_2$ are the 
		endpoints of the line segment separating $A$ and $\overbar{A}$ then 	
		$$\Pr[Z(A) \ | \ \overbar{A} \ ] \ =\  2\cdot \int_A  \vol(B(z_1,\xi)\cap B(z_2,\xi) \cap A)    \ d\xi $$
	\end{lemma}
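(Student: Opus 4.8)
The plan is to compute $\Pr[Z(A)\mid\overbar A]$ directly, by integrating over the positions of the two candidates and then reducing the acceptance condition to a statement involving only the two chord endpoints $z_1,z_2$. Throughout write the candidates as $w_1,w_2$, each uniform on $K$ with $\vol(K)=1$, so their joint density on $K\times K$ is identically $1$. First I would exploit symmetry: the event $Z(A)$ is the disjoint union of ``$w_1\in A$ and $w_1$ is selected'' and the analogous event for $w_2$, and since $w_1,w_2$ are exchangeable these two events are equiprobable. This already accounts for the factor $2$, so it suffices to analyze $\Pr[w_1\in A \text{ and } w_1 \text{ selected}\mid S=\overbar A]$.

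The key is to characterize when $w_1$ is selected given $S=\overbar A$. By Theorem~\ref{thm:voronoi} (equivalently Lemma~\ref{lem: BallIntersect} with $n=2$), $w_1$ is selected if and only if $\overbar A\subseteq H_1(w_1,w_2)$, i.e.\ $\|w_1-v\|\le\|w_2-v\|$ for every $v\in\overbar A$. Two observations collapse this to a two-point condition. First, the \emph{losing} candidate must also lie in $A$: if $w_2\in\overbar A$, then taking $v=w_2$ forces $\|w_1-w_2\|\le 0$, impossible for distinct candidates. Second, \emph{given} $w_1,w_2\in A$, the containment $\overbar A\subseteq H_1$ is equivalent to the single requirement $z_1,z_2\in H_1$, that is $\|w_1-z_j\|\le\|w_2-z_j\|$ for $j=1,2$.

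This second observation is the crux, and I expect it to be the main obstacle; everything else is routine. I would prove it as follows. The forward direction is immediate since $z_1,z_2\in\overbar A$. For the converse, take any $p\in\overbar A$ and show $p\in H_1$. Since $w_2\in A$ and $p\in\overbar A$ lie on opposite sides of the line $m$ through $z_1,z_2$, and $K$ is convex, the segment $w_2p$ meets $m$ at a point $r\in K\cap m=[z_1,z_2]$. Because $z_1,z_2\in H_1$ and $H_1$ is convex, $r\in H_1$. Now suppose toward a contradiction that $p$ lay strictly on the $w_2$-side of the perpendicular bisector of $w_1w_2$; then both $w_2$ and $p$ would lie in the open halfplane $\mathrm{int}(H_2)$, which is convex, forcing $r\in\mathrm{int}(H_2)$ and contradicting $r\in H_1$. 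Hence $p\in H_1$, giving $\overbar A\subseteq H_1$. The coincidence cases (e.g.\ $w_1=w_2$, or points on the bisector) have measure zero and do not affect the probability.

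It remains to assemble the integral. Writing $\xi:=w_2$ for the losing candidate, the two observations show that the set of winning positions for $w_1$ is exactly $\{\,w_1\in A:\|w_1-z_j\|\le\|\xi-z_j\|,\ j=1,2\,\}=B(z_1,\xi)\cap B(z_2,\xi)\cap A$, with $\xi$ necessarily ranging over $A$. Integrating the uniform density $1$ first over $w_1$ and then over $\xi$ yields $\Pr[w_1\in A,\ w_1\text{ selected}\mid\overbar A]=\int_A\vol(B(z_1,\xi)\cap B(z_2,\xi)\cap A)\,d\xi$, and multiplying by the symmetry factor $2$ gives the stated identity.
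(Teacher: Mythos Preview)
Your proof is correct and follows essentially the same route as the paper: symmetry gives the factor $2$, both candidates are shown to lie in $A$, the acceptance condition $\overbar{A}\subseteq H_1$ is reduced to the two-point test $z_1,z_2\in H_1$, and then one integrates. The only minor variation is in the converse of that reduction---the paper argues that the Voronoi edge $E(w_1,w_2)=H_1\cap H_2$ lies entirely in $A$ (since its midpoint does and it cannot cross $[z_1,z_2]\subseteq H_1$), whereas you give an equivalent convexity argument by drawing the segment from the losing candidate through the chord.
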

	\begin{proof}	
		Without loss of generality, let $w_2\in A$ be the selected candidate.  Thus, by Theorem~\ref{thm:voronoi}, 
		we have $\overbar{A}\subseteq H_2(w_1,w_2)$ because, by assumption, the convex hull of the voters is $S=\overbar{A}$. 
		We claim $w_1\notin \overbar{A}$. Otherwise, $w_1\in \overbar{A}\subseteq  H_2(w_1,w_2)$. But, by definition of the Voronoi regions, 
		we have $w_1\in H_1(w_1,w_2)$. Thus, $w_1\in H_1\cap H_2$.  
		However, this cannot happen unless $w_1=w_2$, a zero probability event. Therefore, we may assume that $w_1,w_2\in A $.
		
		Next, consider the line $l(z_1,z_2)$ separating $A$ and $\overbar{A}$; that is, $l(z_1,z_2)$ is the convex hull 
		$\conv(z_1,z_2)$. Given $w_1,w_2\in A $ we claim $l(z_1,z_2)\subseteq H_2(w_1,w_2)$ if and only if 
		$\overbar{A}\subseteq H_2(w_1,w_2)$. First assume $\overbar{A}\subseteq H_2(w_1,w_2)$. As $\bar{A}$ is 
		closed, $z_1$ and $z_2$ are in $\bar{A}$. Thus, by convexity, 
		$l(z_1,z_2)\subseteq \bar{A} \subseteq H_2(w_1,w_2)$.
		On the other hand, assume $l(z_1,z_2)\subseteq H_2(w_1,w_2)$ and consider the
		Voronoi edge defined by $E(w_1,w_2)=H_1(w_1,w_2)\cap H_2(w_1,w_2)$. Then 
		either $E(w_1,w_2)\subseteq A$ or $E(w_1,w_2)\subseteq \overbar{A}$ because $l(z_1,z_2)$ separates $A$ and $\overbar{A}$ 
		and, by assumption, $l(z_1,z_2)\subseteq H_2(w_1,w_2)$. Furthermore, observe  that the midpoint $\frac12(w_1+w_2)$ is 
		in $E(w_1,w_2)$ and in $A$. So it must be the case that $E(w_1,w_2)\subseteq A $. But if $E(w_1,w_2)\subseteq A $ then one of the 
		Voronoi regions contains $\overbar{A}$. In particular, $\overbar{A}\subseteq H_2(w_1,w_2)$ 
		because $l(z_1,z_2)\subseteq H_2(w_1,w_2)$. Therefore,
		\begin{align*}
		\Pr[Z(A) \ | \ \overbar{A} \ ] &\ = \ 2\int_A\int_A \mathbb{I}[\overbar{A}\subseteq H_2(w_1,w_2)]\ dw_2 \, dw_1  \\
		&\ = \ 2  \ \int_A\int_A \mathbb{I}[l(z_1,z_2)\subseteq H_2(w_1,w_2)] \ dw_2 \, dw_1 \\ 
		&\ =\ 2 \ \int_A \vol(B(z_1,w_1)\cap B(z_2,w_1) \cap A) \, dw_1
		\end{align*}
		Here the first equality holds since two candidates are equally likely to win by symmetry and, by 
		Theorem~\ref{thm:voronoi}, the winning candidate $w_2$ satisfies $\overbar{A}\subseteq H_2(w_1,w_2)$. 
		The second equality follows from the fact that $l(z_1,z_2)\subseteq H_2(w_1,w_2)$ if and only if 
		$\overbar{A}\subseteq H_2(w_1,w_2)$ when $w_1,w_2\in A$.
		Finally, the last equality holds because, by Lemma~\ref{lem: BallIntersect}, we know that any 
		candidate $w_2\in B(z_1,w_1)\cap B(z_2,w_1)$ is selected with consensus voting. This is 
		equivalent to $l(z_1,z_2)\subseteq H_2(w_1,w_2)$, by Theorem~\ref{thm:voronoi}. 
	\end{proof}

\section{Lower bounds on Expected Group Size}\label{sec: LB}
In this section we provide lower bounds for the cases where $K$ is either a unit ball~$\mathbb{B}$ 
or a unit square~$\mathbb{H}$.
%
Recall, we assumed that $\vol(K)=1$ but $\vol(\mathbb{B})=\pi$ for the unit ball. 
We remark that this is of no consequence as we may absorb the associated constant 
factor into our bounds. 
 
\subsection{Lower Bound for the Unit Ball}\label{sec: LBunitBall}

For the unit ball~$\mathbb{B}$, a {\em circular segment} is the small piece of the circle formed by cutting along a {\em chord}.
Evidently, this means that every cap of~$\mathbb{B}$ is either a circular segment or the complement of a circular segment.
Thus, to analyze the case of the unit ball we must study circular segments.

\begin{lemma}\label{lem: BallCap}
	Let $\mathbb{B}$ be a unit ball, and  $J_\delta$ be a circular segment with height $\delta\leq \frac{1}{8}$ then 
	$$\Pr[Z(J_\delta) \ | \ \overbar{J_\delta} \ ]  \gtrsim  \delta^4$$
\end{lemma}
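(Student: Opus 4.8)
The plan is to start from Lemma~\ref{lem: BndCap}, which expresses the acceptance probability as
\[
\Pr[Z(J_\delta)\ |\ \overbar{J_\delta}\ ]\ =\ 2\int_{J_\delta}\vol\big(B(z_1,w_1)\cap B(z_2,w_1)\cap J_\delta\big)\,dw_1,
\]
and to lower bound this integral by discarding all but a favourable portion of the outer domain. Place coordinates so that the chord separating $J_\delta$ from $\overbar{J_\delta}$ lies on the horizontal axis, with endpoints $z_1=(-L,0)$ and $z_2=(L,0)$, where $L=\sqrt{2\delta-\delta^2}$ is the half-length of the chord; the segment $J_\delta$ then sits above the axis with apex at height $\delta$. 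The geometry here is strongly anisotropic: $J_\delta$ has width $\asymp\sqrt\delta$ but height $\delta$, so that a typical candidate sits at Euclidean distance $\approx L\asymp\sqrt\delta$ from each endpoint, and the whole difficulty is to control the integrand at the correct order.

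For a fixed candidate $w_1=(x,y)\in J_\delta$, write $r_i=\|w_1-z_i\|$. Subtracting the two circle equations shows that $\partial B(z_1,w_1)$ and $\partial B(z_2,w_1)$ meet precisely at $(x,\pm y)$, so the lens $B(z_1,w_1)\cap B(z_2,w_1)$ is symmetric about the chord, its upper half has apex $w_1$ at height $y$, and it meets the chord in the interval $u\in[L-r_2,\,r_1-L]$ of width $r_1+r_2-2L$. I would then bound the integrand from below by the triangle $\Delta$ with vertices $(L-r_2,0)$, $(r_1-L,0)$ and $(x,y)$: since the lens and $J_\delta$ are both convex and each contains these three vertices, convexity gives $\Delta\subseteq B(z_1,w_1)\cap B(z_2,w_1)\cap J_\delta$, whence the integrand is at least $\tfrac12\,(r_1+r_2-2L)\,y$. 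The confocal ellipse with foci $z_1,z_2$ makes $r_1+r_2$ minimal on the axis, giving the clean estimate $r_1+r_2-2L\ge 2(\sqrt{L^2+y^2}-L)\ge y^2/(\sqrt2\,L)$ once $y\le L$; hence the integrand is $\gtrsim y^3/L$.

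I would then restrict the outer integral to the box $R=\{(x,y):|x|\le c\sqrt\delta,\ \tfrac\delta4\le y\le \tfrac\delta2\}$ for a small absolute constant $c$. The constraint $y\le\delta/2$ keeps the apex bounded away from the curved boundary, which together with $|x|\le c\sqrt\delta$ guarantees both $R\subseteq J_\delta$ and that the three vertices of $\Delta$ lie in $J_\delta$, so the previous paragraph applies uniformly on $R$. There the integrand is $\gtrsim (\delta/4)^3/L\gtrsim \delta^{5/2}$ (using $L\le\sqrt{2\delta}$), while $\vol(R)\asymp \sqrt\delta\cdot\delta=\delta^{3/2}$. Multiplying,
\[
\Pr[Z(J_\delta)\ |\ \overbar{J_\delta}\ ]\ \ge\ 2\int_R \vol\big(B(z_1,w_1)\cap B(z_2,w_1)\cap J_\delta\big)\,dw_1\ \gtrsim\ \delta^{3/2}\cdot\delta^{5/2}\ =\ \delta^4,
\]
as required; the hypothesis $\delta\le \tfrac18$ is what makes all the small-$\delta$ estimates, such as $y\le L$ and the containment in $J_\delta$, valid.

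I expect the main obstacle to be pinning the integrand down at exactly the right order. The naive leading-order approximation $r_1\approx r_2\approx L$ makes the two balls tangent and the lens degenerate, so the bound hinges entirely on the second-order term $r_1+r_2-2L\approx y^2/L$, that is, on how far the confocal ellipse through $w_1$ bulges off the chord; getting this lower bound to hold uniformly over the off-axis part of $R$, while simultaneously certifying that the inscribed triangle never pokes out through the circular arc, is the delicate part. The convexity argument for containment, \emph{namely} that a triangle whose three vertices lie in two convex sets lies in their intersection, is what keeps this manageable.
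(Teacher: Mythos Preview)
Your argument is correct and follows essentially the same route as the paper: apply Lemma~\ref{lem: BndCap}, bound the integrand below by the area of a triangle inscribed in the lens $B(z_1,w_1)\cap B(z_2,w_1)\cap J_\delta$, and then integrate over an inscribed rectangle of height $\asymp\delta$ and width $\asymp\sqrt\delta$. The only differences are cosmetic---the paper places $z_1$ at the origin and obtains the triangle-area bound $\tfrac12 y(\sqrt{x^2+y^2}-x)\ge \frac{y^3}{4(x+y)}$ via Bernoulli's inequality rather than your confocal-ellipse estimate $r_1+r_2-2L\ge y^2/(\sqrt2\,L)$, which leads to an (ultimately harmless) logarithmic factor in its integral that you avoid.
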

\begin{proof}
	Let $z_1$ and $z_2$ be the endpoints of the chord defining the line segment $J_\delta$.
	By rotating the ball, we may assume the chord is horizontal. Furthermore, by translating the coordinates,
	we may assume that $z_1$ lies at the origin. This is illustrated in Figure~\ref{fig: BallCap}.
	
	\begin{figure}[h!]
		\centering
		\includegraphics[width=.50\textwidth]{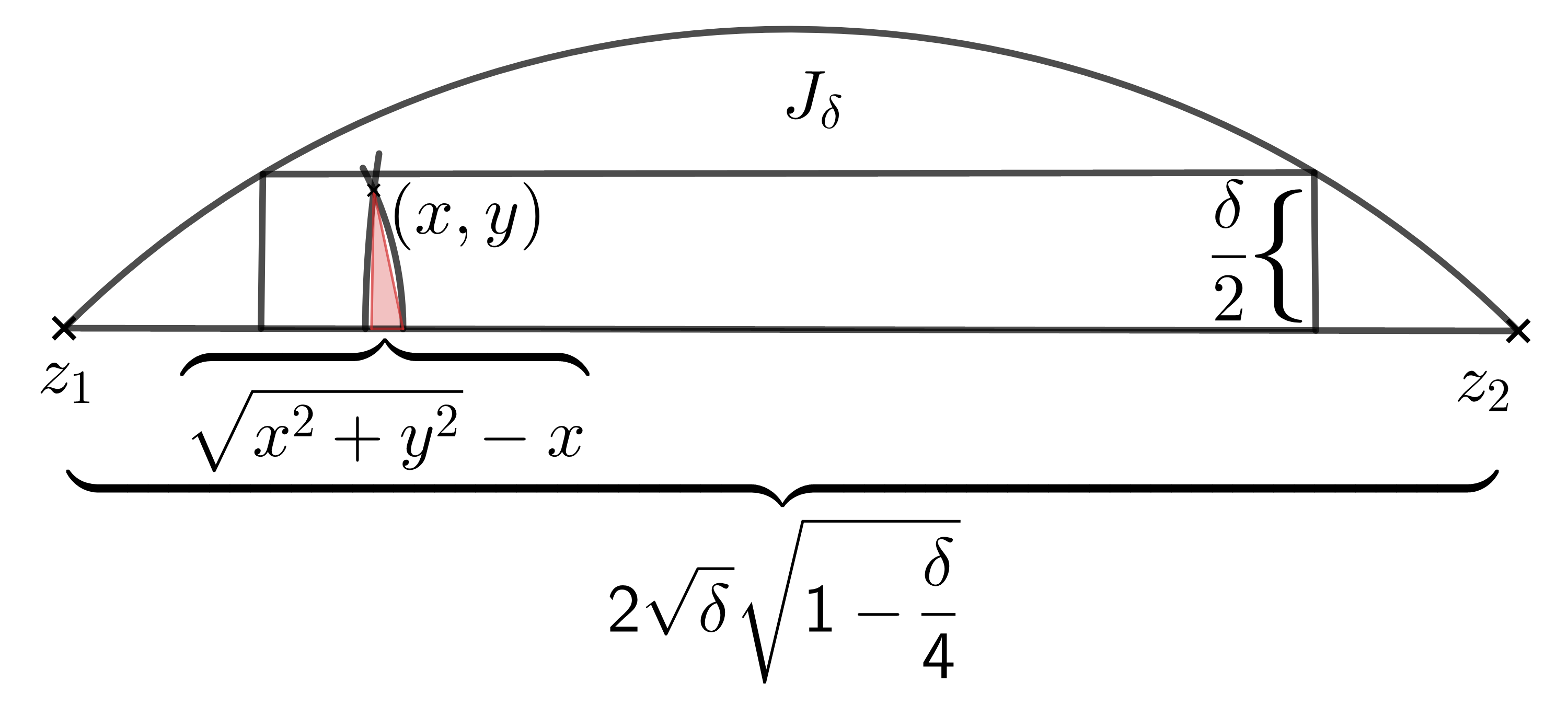}
		\caption{The circular segment $J_\delta$ with inscribed rectangle $R$ of height $\frac{\delta}{2}$}
		\label{fig: BallCap}
	\end{figure}
	
	To apply Lemma~\ref{lem: BndCap}, let $\xi=(x,y)$. Observe that the region
	$B(z_1,\xi)\cap B(z_2,\xi) \cap J_\delta$ contains a triangle $T$
	of height $y$ and width $\sqrt{x^2+y^2}-x$. Again, this is shown in Figure~\ref{fig: BallCap}.
	It follows that
	\begin{equation}\label{eq:BBJ}
	\vol(B(z_1,\xi)\cap B(z_2,\xi) \cap J_\delta) \ \geq\ \frac{1}{2}y\cdot (\sqrt{x^2+y^2}-x) \ \geq \ \frac{y^3}{4(x+y)}
	\end{equation}
	To obtain inequality~(\ref{eq:BBJ}) first observe,  by {\em Bernoulli's Inequality}, that $\sqrt{1+\frac{y^2}{x^2}}\leq 1+ \frac{1}{2} \frac{y^2}{x^2}$. 
	This implies $\left(1+\frac{y^2}{x^2}\right)-\sqrt{1+\frac{y^2}{x^2}}\geq\frac{1}{2}\frac{y^2}{x^2}$.
	Rearranging terms, we obtain $(x^2+y^2)-x\sqrt{x^2+y^2}\geq \frac{y^2}{2}$. 
	Equivalently, $\sqrt{x^2+y^2}-x\geq \frac{y^2}{2\sqrt{x^2+y^2}}\geq \frac{y^2}{2(x+y)}$, as required.
	Combining Lemma~\ref{lem: BndCap} with inequality~(\ref{eq:BBJ}) then gives	
	\begin{equation}\label{eq:prob-J}
	\Pr[Z(J_\delta) \ | \ \overbar{J_\delta} \ ] \ \geq\ \int_{J_\delta} \, \vol(B(z_1,\xi)\cap B(z_2,\xi) \cap J_\delta) \ d\xi 
	\ \geq\  \int_{J_\delta} \frac{y^3}{4(x+y)} \ dx\, dy 	
	\end{equation}
	To lower bound this integral, rather than integrate over the entire circular segment $J_\delta$, we will integrate over an inscribe
	rectangle. Specifically let $R$ be the rectangle of height $\frac12 \delta$ inscribed inside~$J_\delta$.
	This rectangle, shown in Figure~\ref{fig: BallCap}, has width $2\sqrt{1-(1-\frac{\delta}{2})^2}=2\sqrt{\delta}\cdot \sqrt{1-\frac{\delta}{4}}$
	and is centred at $x= \sqrt{\delta}\cdot \sqrt{2-\delta}$.
	Therefore,
	\begin{align}\label{eq:y-cubed}
	\int_{J_\delta}\, \frac{y^3}{4(x+y)} \ dx \, dy 
	&\ \ge\  \int_{R}\, \frac{y^3}{4(x+y)} \ dx \, dy  \nonumber \\
	&\ =\  \int_0^{\frac{\delta}{2}} \int^{\sqrt{\delta}(\sqrt{2-\delta}
		+\sqrt{1-\frac{\delta}{4}})}_{\sqrt{\delta}(\sqrt{2-\delta}-\sqrt{1-\frac{\delta}{4}})} \frac{y^3}{4(x+y)} \ dx\, dy \nonumber\\
	&\ \geq\ \frac{1}{8} \cdot \int_0^{\frac{\delta}{2}} \int^{\sqrt{\delta}(\sqrt{2-\delta}
		+\sqrt{1-\frac{\delta}{4}})}_{\sqrt{\delta}(\sqrt{2-\delta}-\sqrt{1-\frac{\delta}{4}})} \frac{y^3}{x} \ dx\, dy \nonumber\\ 
	&\ \gtrsim\  \int_0^{\frac{\delta}{2}} y^3\cdot \Big[\ln x\Big]^{\sqrt{\delta}(\sqrt{2-\delta}
		+\sqrt{1-\frac{\delta}{4}})}_{\sqrt{\delta}(\sqrt{2-\delta}-\sqrt{1-\frac{\delta}{4}})} \ dy \nonumber\\ 
	&\ \gtrsim\  \delta^4 \cdot \ln\left(\frac{\sqrt{2-\delta}+\sqrt{1-\frac{\delta}{4}}}{\sqrt{2-\delta}-\sqrt{1-\frac{\delta}{4}}}\right) \nonumber\\
	&\ \gtrsim\ \delta^4
	\end{align}
	The second inequality holds because $\delta\leq \frac{1}{8}$; specifically, when $\delta\leq \frac{1}{8}$ the circular segment 
	lies under the line $y=x$. Thus, every point inside the circular segment satisfies $y\leq x$. The last inequality follows by observing that the logarithmic 
	term is lower bounded by $\ln\left(\frac{\sqrt{2-\frac{1}{8}}+\sqrt{1-\frac{1}{32}}}{\sqrt{2}-\sqrt{1-\frac{1}{32}}}\right)\geq \ln 5$.
	Combining the inequalities~(\ref{eq:prob-J})  and~(\ref{eq:y-cubed}) completes the proof.
\end{proof}

Note that Lemma~\ref{lem: BallCap} can be used to prove  lower bound on the expected cardinality of the group. It is also used in later sections to obtain appropriate function $f_K$ when using Corollary~\ref{cor: MainBound}.
\begin{theorem} 
	 For the unit ball~$\mathbb{B}$, the expected group size   after $T$ rounds is  $\E[|G^T|]\gtrsim T^{\frac{1}{8}}$.
\end{theorem}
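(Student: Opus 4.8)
The plan is to set up a self-referential recursion for the quantity $a_t := \E[|G^t|]$. Writing $\E[|G^T|] = \sum_{t=1}^T \Pr[X^t]$, it suffices to establish a per-round bound of the form $\Pr[X^{t+1}] \gtrsim a_t^{-7}$. For then $a_{t+1} = a_t + \Pr[X^{t+1}] \ge a_t + c\,a_t^{-7}$, and the elementary inequality $(x+y)^8 \ge x^8 + 8x^7 y$ (for $x,y\ge 0$) gives $a_{t+1}^8 \ge a_t^8 + 8c$. Hence $a_t^8 \gtrsim t$, i.e. $\E[|G^t|] \gtrsim t^{\frac18}$, which is the claim. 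For bounded $t$ the statement is trivial since $\E[|G^t|]\ge|G^0|\ge 1$, so these rounds serve as the base case and the recursion only needs to run once $a_t$ exceeds a small constant.

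To obtain the per-round bound I would \emph{not} work with a single cap: doing so provably loses too much, because the acceptance probability $\Pr[Z(J_\delta)\mid \overbar{J_\delta}] \gtrsim \delta^4$ of Lemma~\ref{lem: BallCap} forces $\delta$ so small that the resulting estimate is summable and yields only $O(1)$. Instead I would exploit that admissions can occur all along the boundary. Partition the boundary circle into $m \asymp a_t$ equal arcs and let $J_1,\dots,J_m$ be the corresponding pairwise disjoint circular segments, each of height $\delta \asymp m^{-2} \asymp a_t^{-2}$ (so $\delta\le\tfrac18$ once $t$ is large, as required by Lemma~\ref{lem: BallCap}). Since at most one candidate is admitted per round and the caps are disjoint, the events $Z(J_1),\dots,Z(J_m)$ at round $t+1$ are disjoint, whence $\Pr[X^{t+1}] \ge \sum_{i=1}^m \Pr[Z(J_i)]$. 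For each $i$, conditioning on the current hull and using monotonicity (Corollary~\ref{cor:subset}) to pass to the worst-case hull $\overbar{J_i}$, together with Lemma~\ref{lem: BallCap}, yields
\[
\Pr[Z(J_i)] \ \ge\ \Pr\big[Z(J_i)\ \big|\ \overbar{J_i}\,\big]\cdot \Pr\big[S^t \subseteq \overbar{J_i}\,\big] \ \gtrsim\ \delta^4\cdot \Pr\big[S^t \subseteq \overbar{J_i}\,\big].
\]

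The crucial gain comes from averaging the avoidance probabilities over all $m$ caps. Because the caps are disjoint, each current member lies in at most one of them, so the number of occupied caps is at most $|G^t|$; equivalently $\sum_{i=1}^m \mathbb{I}[\,S^t \subseteq \overbar{J_i}\,] \ge m - |G^t|$. Taking expectations gives $\sum_{i=1}^m \Pr[\,S^t \subseteq \overbar{J_i}\,] \ge m - a_t \ge m/2$ by the choice $m \asymp 2 a_t$. Combining with the displayed estimate, $\Pr[X^{t+1}] \gtrsim \delta^4 \cdot (m - a_t) \gtrsim \delta^4 m \asymp m^{-8}\cdot m = m^{-7} \asymp a_t^{-7}$, which is precisely the per-round bound needed to drive the recursion above. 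Note this reproduces the expected scaling: $m\asymp a_t\asymp t^{\frac18}$ caps of height $\delta\asymp t^{-\frac14}$ give admission probability $\asymp t^{-\frac78}$ per round.

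The main obstacle is conceptual rather than computational. The individual avoidance probabilities $\Pr[\,S^t \subseteq \overbar{J_i}\,]$ cannot be bounded below by a constant — once $m \asymp a_t$ the hull typically intrudes into a constant fraction of the caps — so one cannot control them one cap at a time. The right move is to abandon any single cap and bound only their \emph{sum}, through the counting inequality $\#\{\text{occupied caps}\}\le |G^t|$, while accepting the pessimistic worst-case per-cap probability $\delta^4$ furnished by Lemma~\ref{lem: BallCap} and recovering the loss through the multiplicity $m \asymp a_t$ of caps. Matching the two scales, $m\asymp a_t$ caps each of height $\delta\asymp a_t^{-2}$, is exactly what makes $\delta^4 m \asymp a_t^{-7}$ emerge, and closing the resulting recursion then produces the exponent $\tfrac18$.
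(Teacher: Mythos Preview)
Your argument is correct and yields the same bound, but the closing mechanism differs from the paper's. Both proofs tile the boundary of $\mathbb{B}$ by many disjoint circular segments and invoke Lemma~\ref{lem: BallCap} for the per-cap acceptance probability $\gtrsim\delta^4$. The paper fixes the scales as explicit functions of the round, taking $N(t)\asymp t^{1/8}$ caps of height $\delta(t)\asymp t^{-1/4}$, and then introduces a threshold time $\tau$ --- the last round at which a fixed cap is occupied with probability at least $\tfrac12$ --- splitting into two regimes: at time $\tau$, symmetry over the $N(\tau)$ caps gives $\E[|G^\tau|]\gtrsim\tau^{1/8}$ directly from the number of occupied caps, while for $t>\tau$ each cap is empty with probability $\ge\tfrac12$, yielding $\Pr[X^{t+1}]\gtrsim N(t)\,\delta(t)^4\gtrsim t^{-7/8}$ and hence a tail sum $\gtrsim T^{1/8}$. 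You instead let the cap count track the current expected size, $m\asymp a_t$, and replace the threshold/symmetry device by the pointwise counting inequality $\sum_i\mathbb{I}[S^t\subseteq\overbar{J_i}]\ge m-|G^t|$ (valid because each $\overbar{J_i}$ is convex, so a cap can be occupied only if it actually contains a member); taking expectations and choosing $m\ge 2a_t$ gives $\Pr[X^{t+1}]\gtrsim a_t^{-7}$ with no case split, and the recursion $a_{t+1}^8\ge a_t^8+8c$ finishes. Your route is arguably cleaner --- it avoids the auxiliary time $\tau$ and the appeal to rotational symmetry --- while the paper's route makes the target scales $t^{1/8}$ and $t^{-1/4}$ visible from the outset rather than emerging \emph{a posteriori} from the recursion.
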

\begin{proof}
	For each $t$, we construct collection $\{A_1^t, A_2^t,\dots, A_{N(t)}^t\}$ of {\em disjoint} circular segments  on the unit ball. To do this, let the height of each circular segment in the collection be $\delta(t)=\frac{1}{4t^{\frac{1}{4}}}$.Then we can fit $N(t)=\left\lfloor \pi\cdot  t^{\frac{1}{8}}\right\rfloor$ of these segments into $\mathbb{B}$.
	To see this, observe that a circular segment of height $\delta$ has a chord of length $2\sqrt{\delta}\sqrt{2-\delta}$. The central  angle of the segment is then $\theta=2\arctan\left(\frac{\sqrt{\delta}\sqrt{2-\delta}}{1-\delta} \right)\leq 4\sqrt{\delta}$, implying 
	the existence
	of at least $N=\left\lfloor\frac{\pi}{2\sqrt{\delta}}\right\rfloor$ disjoint circular segments of height $\delta$. 
	Now define $\tau\leq T$ to be the last round for which $\Pr[S^{\tau}\cap A_i^{\tau}\neq \emptyset ]\geq \frac{1}{2}$. 
	Thus, 
	\begin{align}\label{eq:N-tau}
	\E[|G^{\tau}|] \geq  \sum_{i=1}^{N(\tau)} \Pr[S^{\tau}\cap A_i^{\tau}\neq \emptyset ]   
	 = N(\tau)\cdot \Pr[S^{\tau}\cap A_1^{\tau}\neq \emptyset ]
	 \ge   \frac12  \left\lfloor \pi \tau^{\frac{1}{8}}\right\rfloor \gtrsim {\tau}^{\frac{1}{8}}
	\end{align}
	Here, the first inequality follows from the observation that if  $S^t\cap A_i^t\neq\emptyset$ then there is a least one 
	group member inside $A_i^t$.
	The equality is due to symmetry; that is, $\Pr[S^t\cap A_i^t\neq\emptyset]=\Pr[S^t\cap A_j^t\neq\emptyset] $ for 
	each pair $1\le i,j\le N(t)$. The second inequality follows because, by definition,
	$\Pr[S^{\tau}\cap A_i^{\tau}\neq \emptyset ]\geq \frac{1}{2}$.
	Next consider rounds $t> \tau$. For these rounds, by definition of $\tau$, we know
	$\Pr[S^{t}\cap A_i^{t}\neq \emptyset ]\leq  \frac{1}{2}$ which implies $\Pr[S^{t}\cap A_i^{t}= \emptyset ]\geq  \frac{1}{2}$. Therefore,
	\begin{align}\label{eq:large-t}
	\Pr[X^{t+1}]
	&\ \gtrsim\   \sum_{i=1}^{N(t)} \Pr \left[Z(A^t_i)\ \wedge\ \left(S^t\cap A^t_i= \emptyset\right)  \right] \nonumber \\
	&\ =\  N(t)\cdot \Pr \left[Z(A^t_i)\ | \ S^t\cap A^t_i= \emptyset  \right]\cdot  \Pr[\ S^t\cap A^t_i= \emptyset \ ] \nonumber \\
	&\ \gtrsim \  N(t)\cdot\Pr\left[Z(A^t_i)\ | \ \overbar{A^t_i}\right] \nonumber
	\end{align}
	Where the last inequality follows by Corollary~\ref{cor:subset}.
	Finally by Lemma~\ref{lem: BallCap}, we see $\Pr\left[Z(A^t_i)\ | \ \overbar{A^t_i}\right]\gtrsim \frac{1}{t} $, and thus  $\Pr[X^{t+1}]\gtrsim t^{-\frac{7}{8}}$ for any $t>\tau$.
	We may now lower bound the expected group size at the end of round $T$. Specifically, for $T\geq 4$,
	\begin{align*}
	\E[|G^{T}|]
	 = \E[|G^{\tau}|]+\sum_{t=\tau+1}^{T}\Pr[X^{t}=1] 
	 \gtrsim  {\tau}^{\frac{1}{8}}+\sum_{t=\tau+1}^{T} t^{-\frac{7}{8}} 
	 \gtrsim  T^{\frac{1}{8}} 
	\end{align*}
The last inequality was obtained using integral bounds.
\end{proof}

\subsection{Lower bound for the Unit Square}\label{sec: LBunitSquare}
For the unit square~$\mathbb{H}$ caps are either {\em right-angled triangles} 
or {\em right-angled trapezoids} (trapezoids with 
two adjacent right angles).  
We can bound the probability of accepting a point inside a right-angled trapezoid by consideration of
the largest inscribed triangle it contains. Thus, it suffices to consider only the case in which the cap forms a triangle.
\begin{lemma}\label{lem: f_sq}
	Let $J_{a,b}$ be triangular cap on the unit square with perpendicular side lengths $a\leq b$. Then
	$$
	\Pr[Z( J_{a,b})\ |\ \overbar{J_{a,b}}\  ]\geq \frac{1}{2^{11}}\, a^4\cdot \left(1+\ln\left(\frac{b}{a}\right)\right).
	$$
\end{lemma}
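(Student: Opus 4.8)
The plan is to follow the template of Lemma~\ref{lem: BallCap}: use Lemma~\ref{lem: BndCap} to turn the acceptance probability into an integral of an inner volume, lower bound that volume by the area of an inscribed triangle, and integrate. First I would fix convenient coordinates by rotating and translating so that the chord separating $J_{a,b}$ from $\overbar{J_{a,b}}$ (the hypotenuse of the triangle) lies on the $x$-axis with endpoints $z_1=(0,0)$ and $z_2=(L,0)$, where $L=\sqrt{a^2+b^2}$, and so that the cap occupies the upper half-plane with apex $O=(b^2/L,\,ab/L)$ and height $h=ab/L$. In these coordinates the long leg (length $b$) is incident to $z_1$ and the short leg (length $a$) is incident to $z_2$, so the interior angle at $z_2$ equals $\arctan(b/a)$, which is nearly a right angle when $b\gg a$. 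Recognizing that this asymmetry is what produces the logarithmic factor is the crux of the whole argument.

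Next I would apply Lemma~\ref{lem: BndCap} to get $\Pr[Z(J_{a,b})\mid \overbar{J_{a,b}}\,]=2\int_{J_{a,b}}\vol\big(B(z_1,\xi)\cap B(z_2,\xi)\cap J_{a,b}\big)\,d\xi$. For $\xi=(x,y)$ in the cap, both balls pass through $\xi$ and meet the $x$-axis at $x_+=\sqrt{x^2+y^2}$ and $x_-=L-\sqrt{(L-x)^2+y^2}$; since $0\le x_-\le x_+\le L$ and $\xi$ lies below both legs, convexity of the two balls and of $J_{a,b}$ shows that the triangle with apex $\xi$ and base $[x_-,x_+]\times\{0\}$ is contained in $B(z_1,\xi)\cap B(z_2,\xi)\cap J_{a,b}$. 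Its area is $\tfrac12 y\,(x_+-x_-)$, and writing $u=L-x$ for the distance to the obtuse corner $z_2$ I would discard the nonnegative contribution of $B(z_1,\xi)$ and bound $x_+-x_-\ge \sqrt{u^2+y^2}-u\ge \tfrac{y^2}{2(u+y)}$, giving inner volume $\ge \tfrac{y^3}{4(u+y)}$. It then remains to lower bound $\tfrac12\int_{J_{a,b}}\tfrac{y^3}{u+y}\,du\,dy$.

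Finally I would restrict this integral to the wedge near $z_2$, concretely $\{(x,y):0<y\le a/8,\ \tfrac ab y\le u\le L/2\}$, on which the inscribed triangle is valid. The inner $u$-integral gives $\ln\!\big((L/2+y)/((1+\tfrac ab)y)\big)$; here the numerator is $\ge L/2\ge b/2$ and the denominator is $\le 2y$, so the logarithm is at least $\ln(b/4y)\ge\ln(2b/a)>0$ for every $y\le a/8$. This positivity throughout the range is exactly why one restricts $y$ to a small multiple of $a$. Integrating $y^3$ against this bound over $[0,a/8]$ and using $\ln(2b/a)\ge \ln2\,(1+\ln(b/a))$ yields $\Pr[Z(J_{a,b})\mid \overbar{J_{a,b}}\,]\gtrsim a^4\,(1+\ln(b/a))$, and the $b=a$ case (where $\ln(b/a)=0$) is covered automatically since $\ln(2b/a)\ge\ln2$. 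The main obstacle — and the decisive idea — is extracting the logarithm from the correct corner: anchoring the inscribed triangle at the acute corner $z_1$ forces $x\gtrsim \tfrac ba y$, leaving only a bounded range of integration and hence $\Theta(a^4)$ with no logarithm, whereas anchoring at the obtuse corner $z_2$ lets $u$ be as small as $\tfrac ab y$, so the distance variable sweeps $[\tfrac ab y,\,\Theta(b)]$ and contributes $\ln(b/y)$. Sharpening the constant to $1/2^{11}$ requires only a less wasteful version of this estimate (integrating $y^3\ln(b/4y)$ exactly and extending the $u$-range toward $L$ rather than stopping at $L/2$); the stated order of growth already follows from the argument as sketched.
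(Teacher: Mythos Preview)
Your approach is essentially the paper's. Both arguments rotate so the hypotenuse is horizontal, invoke Lemma~\ref{lem: BndCap}, lower bound the inner volume by the inscribed-triangle estimate $\tfrac{y^3}{4(d+y)}$ with $d$ the distance to the corner adjacent to the short leg, and then integrate so that $d$ ranges from order $a^2/b$ up to order $b$, producing the factor $\ln(b/a)$. The only cosmetic differences are that the paper labels the short-leg corner $z_1$ (so its variable $x$ is your $u$) and integrates over an inscribed rectangle of half the height rather than your wedge $\{0<y\le a/8,\ \tfrac{a}{b}y\le u\le L/2\}$; your explicit identification of \emph{which} corner yields the logarithm is a nice piece of intuition the paper leaves implicit. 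As you note, your constants are a little slack relative to the stated $1/2^{11}$, but the paper's own arithmetic in the final line of (\ref{eq:y-cubed-3}) is also off by a small constant factor, so no substantive idea is missing.
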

\begin{proof}
	Take the unit square~$\mathbb{H}$ and the cap $J_{a,b}$.
	Without loss of generality, we may assume the hypotenuse is horizontal with one endpoint $z_1$ at the origin.
	This is shown in Figure~\ref{fig: SQ}.
	\begin{figure}[h!]	
		\centering
		\includegraphics[width=.40\textwidth]{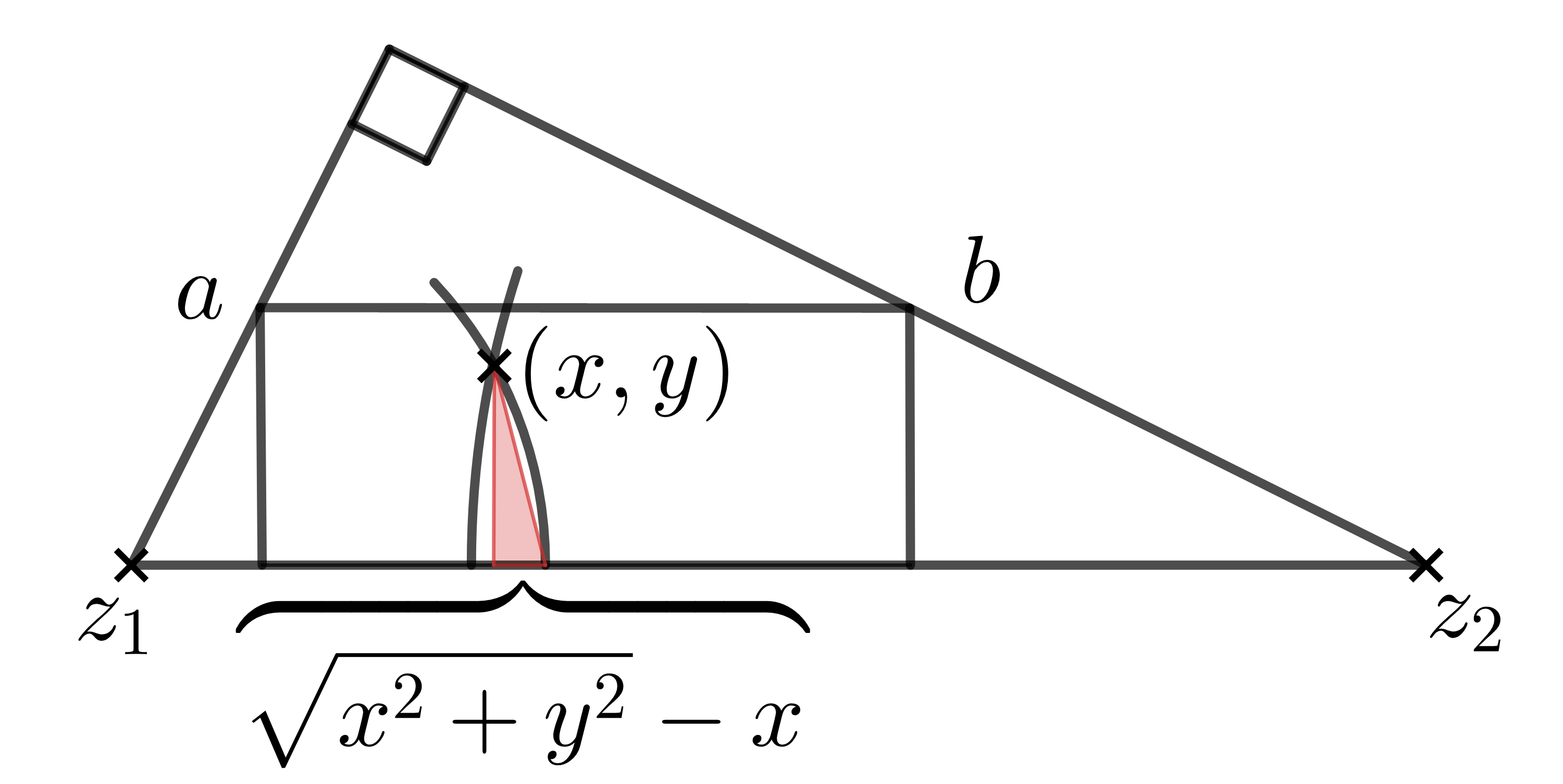}	
		\caption{The Cap $J_{a,b}$ (with sides lengths $a,b$ and height $\frac{ab}{\sqrt{a^2+b^2}}$) and inscribed 
			rectangle $R$ of half the height.}\label{fig: SQ}
	\end{figure}	
	
	Again, setting $\xi=(x,y)$, we have that 
	\begin{equation}\label{eq:BBJ-2}
	\vol(B(z_1,\xi)\cap B(z_2,\xi) \cap J_\delta) \ \geq\ \frac{1}{2}y\cdot (\sqrt{x^2+y^2}-x) \ \geq \ \frac{y^3}{4(x+y)}
	\end{equation}
	Combining Lemma~\ref{lem: BndCap} with inequality~(\ref{eq:BBJ-2}) gives	
	\begin{equation}\label{eq:prob-J-2}
	\Pr[Z(J_{a,b}) \ | \ \overbar{J_{a,b}}\ ] \ \geq\  \int_{J_{a,b}} \vol(B(z_1,\xi)\cap B(z_2,\xi)\cap J_{a,b} )\ d\xi 
	\ \geq\ \int_{J_{a,b}} \frac{y^3}{4(x+y)}\ dx \,dy \\
	\end{equation}
	Recall the cap $J_{a,b}$ has perpendicular side lengths $a \le b$. Thus the height of $J_{a,b}$ is $\frac{ab}{\sqrt{a^2+b^2}}$.
	Again, a lower bound can be obtained by integrating over an inscribed rectangle rather than over the entire cap $J_{a,b}$.
	Specifically, let $R$ be the inscribed rectangle with half the height of $J_{a,b}$; see Figure~\ref{fig: SQ}. 
	Then
	\begin{align}\label{eq:y-cubed-2}
	\int_{J_{a,b}}\, \frac{y^3}{4(x+y)}\ dx\, dy &\ \ge\  \int_{R}\, \frac{y^3}{4(x+y)}\  dx\,  dy  \nonumber \\
	&\ \geq\ \int_{0}^{\frac{ab}{2\sqrt{a^2+b^2}}}\int_{\frac{a^2}{2\sqrt{a^2+b^2}}}^{\frac{a^2}{\sqrt{a^2+b^2}}
		+\frac{b^2}{2\sqrt{a^2+b^2}}}  \frac{y^3}{4(x+y)} \ dx\, dy \nonumber\\ 
	&\ =\ \frac{1}{4}\cdot  \int_{0}^{\frac{ab}{2\sqrt{a^2+b^2}}} y^3\cdot \Big[\ln(x+y)\Big]_{\frac{a^2}{2\sqrt{a^2+b^2}}}^{\frac{a^2}{\sqrt{a^2+b^2}}} \ dy \nonumber\\
	&\ =\ \frac{1}{4}\cdot  \int_{0}^{\frac{ab}{2\sqrt{a^2+b^2}}} y^3\cdot \ln\left(\frac{\frac{a^2}{\sqrt{a^2+b^2}}
		+\frac{b^2}{2\sqrt{a^2+b^2}}+y}{\frac{a^2}{2\sqrt{a^2+b^2}}+y} \right) \, dy \nonumber\\
	&\ =\ \frac{1}{4} \cdot \int_{0}^{\frac{ab}{2\sqrt{a^2+b^2}}} y^3\cdot 
	\ln\left(1+\frac{a^2+b^2}{a^2+2y\sqrt{a^2+b^2}}\right)\, dy \nonumber \\
	&\ \ge\ \frac{1}{4} \cdot \int_{0}^{\frac{ab}{2\sqrt{a^2+b^2}}} y^3\cdot 
	\ln\left(1+\frac{a^2+b^2}{a^2+ab}\right)\, dy
	\end{align}
	To simplify (\ref{eq:y-cubed-2}) recall that $a\le b$, by assumption. Thus $\frac{b}{\sqrt{a^2+b^2}}\geq \frac{1}{\sqrt{2}}\ge \frac{1}{2}$.	 
	This implies that
	\begin{align}\label{eq:y-cubed-3}
	\int_{J_{a,b}}\, \frac{y^3}{4(x+y)}\,dx\, dy 	
	&\ \geq\  \frac{1}{4}\cdot \int_{0}^{\frac{ab}{2\sqrt{a^2+b^2}}} y^3\cdot \ln\left(1+\frac{a^2+b^2}{a^2+ab}\right) \, dy\nonumber\\ 
	&\ \geq\  \frac{1}{8}\cdot  \left(1+\ln\left(\frac{b}{a}\right)\right)\cdot \int_{0}^{\frac{ab}{2\sqrt{a^2+b^2}}} y^3 \, dy \nonumber\\
	&\ \geq\  \frac{1}{8}\cdot  \left(1+\ln\left(\frac{b}{a}\right)\right)\cdot \int_{0}^{\frac{a}{4}} y^3 \, dy \nonumber\\
	&\ =\  \frac{1}{2^{11}} \, a^4\cdot \left(1+\ln\left(\frac{b}{a}\right)\right)
	\end{align}
	Here the second inequality arises since 
	$$\left(1+\frac{a^2+b^2}{a^2+ab}\right)^2\ =\ \left(\frac{b}{a}+\frac{2}{\frac{b}{a}+1}\right)^2
	\ \geq\  \left(\frac{b}{a}+\frac{a}{b}\right)^2\ \geq\ 2+\left(\frac{b}{a}\right)^2 \ \geq\  e\cdot \frac{b}{a}$$
	Thus, taking the log on both sides gives the second inequality. Putting together 
	inequalities~(\ref{eq:prob-J-2}) and~(\ref{eq:y-cubed-3}) completes the proof.
\end{proof}

\begin{theorem}
	For the unit square $\mathbb{H}$, the expected group size  after $T$ rounds is $\E[|G^{T}|] \gtrsim\ln(T)$.
\end{theorem}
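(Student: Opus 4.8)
The plan is to mirror the structure of the lower-bound argument for the unit ball, replacing circular segments with triangular corner caps and invoking Lemma~\ref{lem: f_sq} in place of Lemma~\ref{lem: BallCap}. Fix one corner of $\mathbb{H}$ and, for each round $t$, let $A^t$ be the triangular cap seated at that corner with both perpendicular legs equal to $\delta(t)=t^{-1/4}$. Since $a=b=\delta(t)$, Lemma~\ref{lem: f_sq} gives $\Pr[Z(A^t)\mid \overbar{A^t}]\gtrsim \delta(t)^4 = 1/t$; the logarithmic factor $1+\ln(b/a)$ plays no role for symmetric caps, and the final $\ln T$ will instead be harvested by summing $1/t$ over the rounds. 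As in the ball argument, I would first write $\Pr[X^{t+1}]\ge \Pr[Z(A^t)\wedge (S^t\cap A^t=\emptyset)] = \Pr[Z(A^t)\mid S^t\cap A^t=\emptyset]\cdot\Pr[S^t\cap A^t=\emptyset]$, and then apply Corollary~\ref{cor:subset} to replace the conditioning event by the worst-case hull $\overbar{A^t}$, obtaining $\Pr[X^{t+1}]\gtrsim \tfrac1t\cdot\Pr[S^t\cap A^t=\emptyset]$.

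It then remains to show that the convex hull avoids the shrinking corner cap with at least constant probability, i.e. $\Pr[S^t\cap A^t=\emptyset]\gtrsim 1$ for all large $t$. I would package this exactly as in the ball proof by letting $\tau$ be the last round for which $\Pr[S^\tau\cap A^\tau\neq\emptyset]\ge \tfrac12$; for every $t>\tau$ the avoidance bound holds by definition, so $\Pr[X^{t+1}]\gtrsim 1/t$, and summing over $\tau<t\le T$ via integral bounds gives $\sum_{t>\tau}\Pr[X^{t+1}]\gtrsim \ln T-\ln\tau$. The crucial contrast with the ball is that the square offers only a constant number of disjoint corner caps, so the logarithm cannot be extracted from many simultaneous spatial caps as the $T^{1/8}$ was; it must come entirely from the temporal sum, which forces me to argue that $\tau=O(1)$.

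Establishing $\tau=O(1)$ is the main obstacle. Concretely, I must bound $\Pr[S^t\cap A^t\neq\emptyset]=\Pr[\text{some member lies in }A^t]$ by controlling, over all past rounds $s\le t$, the probability that the round-$s$ winner lands in the corner triangle of legs $t^{-1/4}$. A naive union bound using only the monotonicity in Corollary~\ref{cor:subset} (bounding each round by the single-point hull) is too weak: it sums $t$ terms of size $\sim t^{-1/2}$ and diverges. The finer estimate I would pursue rests on Theorem~\ref{thm:voronoi}: for a candidate $w$ within distance $\sim\delta$ of the corner to be elected, the competing candidate $w'$ must satisfy $S^{s-1}\subseteq H_w(w,w')$, which confines $w'$ to a region of area $\sim\delta^2$ behind $w$; integrating over $w\in A^t$ gives a per-round probability $\sim\delta^4\sim 1/t$, so that summing over $s\le t$ yields an $O(1)$ rather than divergent bound on the expected number of members in $A^t$. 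Pinning down the constant so that this total is genuinely below $\tfrac12$ is the delicate step, and it is precisely this knife-edge $\Theta(1)$ behaviour that is consistent with the matching $O(\ln T\cdot\ln\ln T)$ upper bound. Once $\tau=O(1)$ is secured, the early rounds contribute negligibly and $\E[|G^T|]\ge\sum_{t>\tau}\Pr[X^{t+1}]\gtrsim\ln T$, completing the proof.
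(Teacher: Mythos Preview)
Your reduction $\Pr[X^{t+1}]\gtrsim \tfrac{1}{t}\cdot\Pr[S^t\cap A^t=\emptyset]$ via Corollary~\ref{cor:subset} and Lemma~\ref{lem: f_sq} is correct and is exactly what the paper does. The gap is in your argument that the shrinking corner cap is avoided with constant probability. You assert that when a winner $w$ lands in $A^t$ the loser $w'$ is confined to a region of area $\sim\delta^2$, but this confinement depends entirely on the hull $S^{s-1}$: it only holds once $S^{s-1}$ is already spread out across the square. If the hull is still a single point near the corner (a positive-probability configuration in early rounds), the loser can sit almost anywhere in $\mathbb{H}$ and the per-round probability of accepting into $A^t$ is $\sim\vol(A^t)\sim\delta^2=t^{-1/2}$, not $\delta^4$, so your sum over $s\le t$ is not $O(1)$. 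Even granting an $O(1)$ expected count, Markov does not deliver $\Pr[\,\cdot\,]<\tfrac12$; the ``knife-edge constant'' you flag is a genuine obstruction, and nothing in your outline resolves it.

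The paper closes precisely this gap with a two-stage argument. First it applies Lemma~\ref{lem: f_sq} to four \emph{fixed} large corner triangles (legs $\tfrac14$) to show that after a constant number $\tau$ of rounds the hull contains the central square $Q=[\tfrac14,\tfrac34]^2$ with positive constant probability. Second, it conditions on $Q\subseteq S^\tau$: with hull vertices at all four corners of $Q$, Lemma~\ref{lem: BallIntersect} forces \emph{both} candidates to lie in $A^t$ whenever the winner does, so the per-round acceptance probability into $A^t$ is at most $\vol(A^t)^2\lesssim 1/t$ uniformly in the round. The avoidance probability is then bounded below by a product $\prod_{i=\tau}^{t-1}(1-c/t)\ge(1-c/t)^{t}\to e^{-c}>0$, which is a positive constant for \emph{any} $c$, so no delicate constant-balancing is needed. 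This conditioning on a large intermediate hull is the missing idea in your proposal.
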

\begin{proof}
	For each $t$, consider the triangle $A^t=\conv \left((0,0),\left(0,\frac{1}{4t^{\frac{1}{4}}}\right),\left(\frac{1}{4t^{\frac{1}{4}}},0\right)\right)$.
	As discussed, the triangle $A^t$ is a cap of the unit square. Observe that,
	\begin{align}\label{eq: A^t}
	\Pr[X^{t+1}]
	&\ \geq \ \Pr[Z(A^t) \wedge (S^t\cap A^t=\emptyset) ] \nonumber \\ 
	&\ =\ \Pr[Z(A^t)\ |\ (S^t\cap A^t=\emptyset)  \ ] \cdot  \Pr[\ S^t\cap A^t=\emptyset \ ]  \nonumber \\ 
	&\ \ge \ \Pr[Z(A^t)\ |\ \overbar{A^t}\ ] \cdot  \Pr[\ S^t\cap A^t=\emptyset \ ]  \nonumber \\
	&\ \gtrsim\ \frac{1}{t}\cdot \Pr[\ S^t\cap A^t=\emptyset \ ]
	\end{align} 
	Here the second inequality follows from Corollary~\ref{cor:subset}. The third inequality 
	is derived by applying Lemma~\ref{lem: f_sq} with respect to the cap $A^t$, for which $a=b= \frac{1}{4t^{\frac{1}{4}}}$.
	\begin{claim}\label{claim: Q}
		There exist constants $c,\tau$ such that $\Pr[\ S^t\cap A^t=\emptyset \ ]\geq c $ for all $t\geq \tau$
	\end{claim}
	\begin{proof}[proof of claim]
		First we show that there exists constant $\tau$ such that the convex hull $S^\tau$ contains the square $Q=\left[\frac{1}{4},\frac{3}{4}\right]^2$ with probability bounded below by some positive constant. To see this, let $T_1,T_2,T_3$ and $T_4$ be right-angled triangles each containing one of the corners with perpendicular side lengths
		$a=b=\frac14$. By Lemma~\ref{lem: f_sq}, the probability of accepting a candidate inside triangle $T_\ell$, given
		no member is currently in $T_\ell$, is lower bounded by  $\frac{a^4}{2^{11}}=\frac{1}{2^{19}}$. Hence we have,
		\begin{equation}\label{eq:T-ell}
		\Pr[T_\ell \cap S^{i} = \emptyset]  \ \leq\  \left(1-\frac{1}{2^{19}}\right)^i \ = \ k^i
		\end{equation}
			\begin{figure}[ht]	
			\centering
			\includegraphics[width=.42\textwidth]{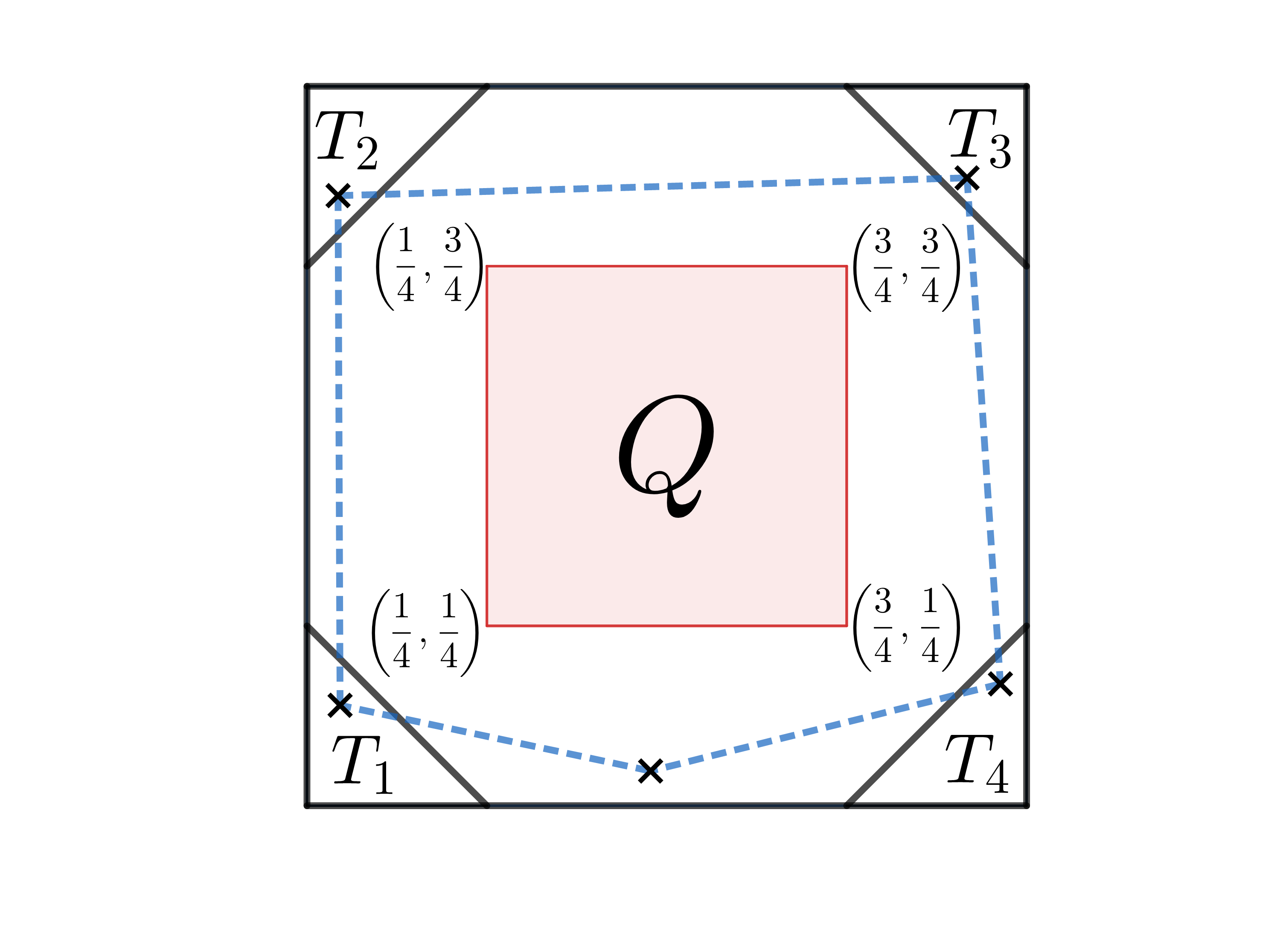}
			\includegraphics[width=.42\textwidth]{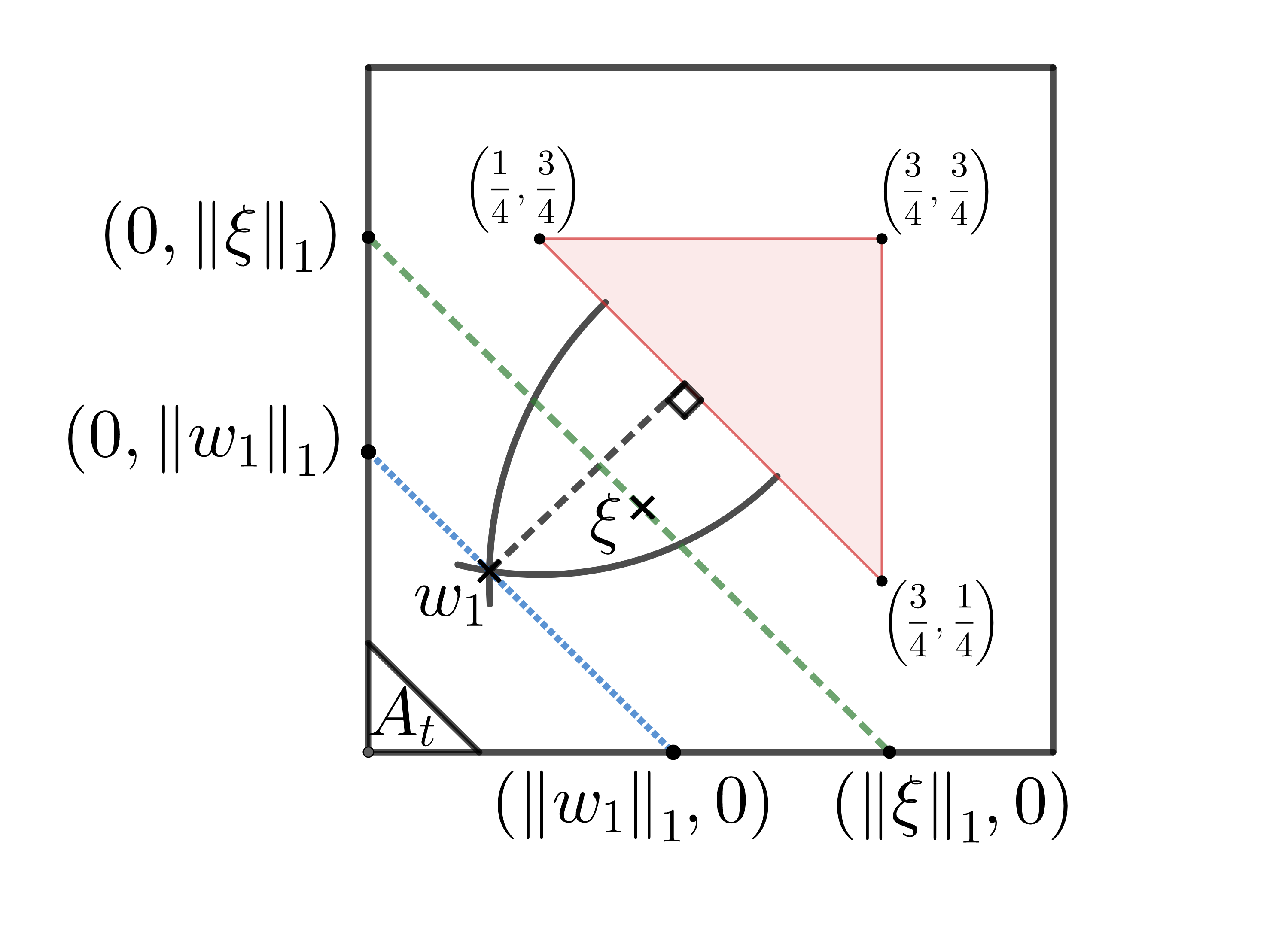}
			\caption{ Figure on the left illustrates that if there is a member in each $T_\ell$ then $Q\subseteq  S_i$. The figure on the right 
				shows if $\xi \in B\left(\left(\frac{1}{4},\frac{3}{4}\right),w_1\right)\cap B\left(\left(\frac{3}{4},\frac{1}{4}\right),w_1\right) $ 
				then $\| \xi \|_1\geq \|w_1\|_1$  } \label{fig:LowerBoundSquare}
		\end{figure}
		
			Observe that if there is a member of $S^i$ selected in each of the four triangles $T_1,T_2,T_3$ and $T_4$
		then $Q\subseteq S^i$, as illustrated in Figure~\ref{fig:LowerBoundSquare}.  Consequently, applying the union 
		bound and (\ref{eq:T-ell}), for all $1\le \ell \le 4$, we have $\Pr[Q\not\subseteq S^i] \leq 4k^i$.  Note, since $k=1-\frac{1}{2^{19}}$, there exists 
		a fixed constant $\tau$ such that $4k^\tau <1$; hence, $\Pr[Q \subseteq S^\tau]\geq 1-4k^\tau>0 $. Consider any round $t\geq \tau+1$ then,
		\begin{align}
		\Pr[\ S^t\cap A^t=\emptyset \ ] &\geq \Pr[\ (S^t\cap A^t=\emptyset) \wedge (Q\subseteq S^\tau ) \ ] \nonumber  \\
		&=\Pr[\ S^t\cap A^t=\emptyset\ |\ Q\subseteq S^\tau  \ ] \Pr[Q\subseteq S^\tau] \label{eq: QL}
		\end{align}
		Since we know that $\Pr[Q\subseteq S^\tau]$ is bounded below by a constant, we need to show the remaining term of (\ref{eq: QL}) is bounded below by constant. Recursively we have, 
		\begin{align}
		\Pr[ S^t\cap A^t&=\emptyset \ | \ Q\subseteq S^\tau \ ] \nonumber\\
		&\ = \ \Pr[\neg Z(A^t) \wedge(S^{t-1}\cap A^t=\emptyset)\  |\  Q\subseteq S^\tau] \nonumber \\
		&\ =\ \Pr[\neg Z(A^t) \ | \ (S^{t-1}\cap A^t=\emptyset)\wedge  (Q\subseteq S^\tau) ] \cdot  \Pr[\ S^{t-1}\cap A^t=\emptyset \ |Q\subseteq S^\tau ]   \nonumber \\
		&\ =\ \Pr[\ S^\tau \cap A^t=\emptyset\ | \ Q\subseteq S^\tau ] \cdot \prod_{i=\tau}^{t-1} \, (1-\Pr[Z(A^t) \ | \ S^{i}\cap A^t=\emptyset \wedge (Q\subseteq S^\tau ) ])   \nonumber \\
		&\ \gtrsim\   \Pr[\ S^\tau \cap A^\tau =\emptyset\ | \ Q\subseteq S^\tau ] \cdot\prod_{i=\tau}^{t-1} \, (1-\Pr[Z(A^t) \ | \ S^{i}\cap A^t=\emptyset \wedge (Q\subseteq S^\tau ) ]) \nonumber \\
		&\ \gtrsim\   \Pr[\ S^\tau \cap A^\tau =\emptyset\ | \ Q\subseteq S^\tau ] \cdot\prod_{i=\tau}^{t-1} \, (1-\Pr[Z(A^t) \ | \ Q ]) \nonumber \\
		&\ \gtrsim\ \prod_{i=\tau}^{t-1} \, (1-\Pr[Z(A^t) \ | \ Q ]) \label{eq: QS}
		\end{align}
		Here the first inequality holds because, by definition, $A^t\subseteq A^\tau $ for all $t\geq \tau+1$. For  the second inequality, observe that  $S^\tau \subseteq S^i$ when $i\geq \tau$.
		Thus, by Corollary~\ref{cor:subset}, $\Pr[Z(A^t) \ | \ S^{i}\cap A^t=\emptyset \wedge (Q\subseteq S^\tau ) ]\leq\Pr[Z(A^t) \ |\  Q\ ]$. The last inequality holds because $\tau$ is constant, so  $A^\tau$ is a small fixed triangle and  there is a positive probability that no candidate was selected inside $A^\tau$ in the first $\tau$ rounds, even when the convex hull of the members includes the square $Q$. 
		
		Next, we claim that if $Q$ is the convex hull then a candidate can be accepted inside $A^t$ only if
		both candidates are in $A^t$. In particular, this gives the following useful inequality:
		\begin{equation}\label{eq:useful}
		\Pr[\ Z(A^t)\ | \ Q] \ \leq\  \vol(A^t)^2 \ =\ \left(\frac12\left(\frac{1}{4t^{\frac14}}\right)^2\right)^2\ =\ \frac{1}{2^{10}\cdot t}
		\end{equation}
		To see this, suppose the claim is false. That is, $w_2\in A^t$ is selected when $Q$ is the convex hull but $w_1\notin A^t$. 
		Then, by Lemma~\ref{lem: BallIntersect}, it must  be the case that
		$$w_2 \in  B\left(\left(\frac{1}{4},\frac{1}{4}\right),w_1\right) \cap B\left(\left(\frac{3}{4},\frac{1}{4}\right),w_1\right)
		\cap  B\left(\left(\frac{1}{4},\frac{3}{4}\right),w_1\right) \cap  B\left(\left(\frac{3}{4},\frac{3}{4}\right),w_1\right)$$
		Observe that $w_1\in A^t$ if and only if  $\|w_1\|_1=|(w_1)_1|+|(w_1)_2|\leq \frac{1}{4t^{\frac{1}{4}}}$, where $(w_1)_i$ 
		denotes the $i$'th component of $w_1$. Since $w_1\not\in A^t$, 
		we have $\|w_1\|_1>\frac{1}{4t^{\frac{1}{4}}}$. As illustrated in Figure~\ref{fig:LowerBoundSquare}, the region
		$B((\frac{1}{4},\frac{3}{4}),w_1)\cap B((\frac{3}{4},\frac{1}{4}),w_1)$ does not 
		intersect $A^t$ if $w_1\not\in A^t$. Thus, the winner $w_2$ cannot be inside $A^t$ and the claim is verified.
		
		Finally combining (\ref{eq: QL}),(\ref{eq: QS}) and (\ref{eq:useful}), we have for any $t\geq \tau+1$ 
		\begin{align*}
		\Pr[ S^t\cap A^t=\emptyset]\gtrsim \prod_{i=\tau}^{t-1}\left(1- \frac{1}{2^{10} \cdot t}\right) 
		&\ \gtrsim\   \left(1- \frac{1}{2^{10} \cdot t}\right)^{t}\qedhere
		\end{align*}
	\end{proof}
	Using the Claim~\ref{claim: Q} and (\ref{eq: A^t}) we get  $\Pr[X^{t+1}]\gtrsim \frac{1}{t}$ for any $t\geq \tau+1$. Ergo, we have\, $\E[G^T]  =  \sum_{t=1}^T \Pr[X^t]  \gtrsim  \sum_{t=\tau+1}^T \frac{1}{t}  \cong  \ln(T) $.
\end{proof}

\section{Upper Bounds on Expected Group Size}\label{sec: UB}
We now apply the techniques developed in Sections~\ref{sec: MainTools} and \ref{sec: LB} to upper bound the 
expected cardinality of the group for the unit ball and the unit square. Specifically, we apply Corollary~\ref{cor: MainBound} 
to these metric spaces 
using the $f_K$ obtained from Lemma~\ref{lem: BallCap} and Lemma~\ref{lem: f_sq}, respectively.

\subsection{Upper Bound for the Unit Ball}
Observe that exactly one of the two Voronoi regions corresponds to a circular segment. Furthermore, since a circular 
segment fits inside its complement, $\arg\min\limits_i \Pr[Z(H_i(w_1,w_2))|\overbar{H_i(w_1,w_2)}]$ is attained by 
the $H_i$ corresponding to a circular segment. Let $\delta(w_1,w_2)$ denote the height of the circular segment for this 
Voronoi region $H_i$. Then, by Lemma~\ref{lem: BallCap}, we 
have $\min\limits_{i} \Pr[\ Z(H_i(w_1,w_2))\ |\ \overbar{H_i(w_1,w_2)}\ ]\gtrsim\delta(w_1,w_2)^4$. 
Thus  $f_\mathbb{B}(w_1,w_2)=\delta(w_1,w_2)^4$, satisfies the conditions of Corollary~\ref{cor: MainBound}. 
However when using Corollary~\ref{cor: MainBound} we need to understand $\Phi(\lambda)$; Lemma~\ref{lem: UBBall} allows us to do exactly that. 

\begin{lemma}\label{lem: UBBall}
	Let $\delta(w_1,w_2)$ be the height of the circular segment $H_i(w_1,w_2)$ formed by the Voronoi regions. 
	Then, for all $\lambda\leq \frac{1}{10^4}$, we have
	$$\Phi(\lambda) \ =\ \int \int \mathbb{I}\left[\delta(w_1,w_2)^4\leq \lambda\right] \, dw_1 \, dw_2 \ \lesssim\  \lambda^{\frac{7}{8}}$$
\end{lemma}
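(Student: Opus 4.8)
The plan is to reduce the bound on $\Phi(\lambda)$ to a purely geometric estimate on the distance from the centre of the ball to the perpendicular bisector of $w_1$ and $w_2$. Writing $h=\lambda^{\frac14}$, the condition $\delta(w_1,w_2)^4\le\lambda$ is equivalent to $\delta(w_1,w_2)\le h$. A chord of the unit ball at distance $d$ from the centre cuts off a circular segment of height $1-d$, and the bisecting chord of the two Voronoi caps is exactly the perpendicular bisector $\ell$ of the segment $w_1w_2$. Hence the event $\delta\le h$ is precisely the event that $\ell$ lies at distance $d\ge 1-h$ from the centre, so that $\Phi(\lambda)=\vol\{(w_1,w_2)\in\mathbb{B}^2 : d\ge 1-h\}$, and it suffices to show this volume is $\lesssim h^{\frac72}$.

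First I would change variables from the pair $(w_1,w_2)$ to a frame adapted to $\ell$. Since $\ell$ is the mirror line swapping $w_1$ and $w_2$, write $n$ for the unit vector along $w_2-w_1$ (angle $\phi$), put $\rho=\|w_2-w_1\|$, and decompose the midpoint $m=\tfrac12(w_1+w_2)$ as $m=\tilde d\,n+s\,n^\perp$, where $\tilde d=m\cdot n$ is the signed distance of $\ell$ from the centre and $d=|\tilde d|$. A standard Blaschke--Petkantschin type computation (the linear map $(w_1,w_2)\mapsto(m,w_2-w_1)$ has unit Jacobian, then polar coordinates for $w_2-w_1$ and a rotation for $m$) gives the identity $dw_1\,dw_2=\rho\,d\rho\,d\phi\,d\tilde d\,ds$. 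In these coordinates $w_1$ and $w_2$ have coordinates $(\tilde d-\rho/2,\,s)$ and $(\tilde d+\rho/2,\,s)$ in the orthonormal frame $(n,n^\perp)$, so the membership constraints $\|w_i\|\le1$ read $(\tilde d\pm\rho/2)^2+s^2\le1$.

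The key observation, driving the whole estimate, is that on the region $\tilde d\ge 1-h$ (handling $\tilde d\le-(1-h)$ by symmetry, and $\phi$ by a factor $2\pi$) these constraints force all parameters to be small simultaneously: the constraint for the outer point $w_2$ gives $\rho\le 2(1-\tilde d)\le 2h$ together with $|s|\le\sqrt{1-(\tilde d+\rho/2)^2}$, and the constraint for $w_1$ is then automatically implied. Setting $a=1-\tilde d\in[0,h]$ and using $1-(\tilde d+\rho/2)^2\le 2(a-\rho/2)$, the integral collapses, up to absolute constants, to
$$\Phi(\lambda)\ \lesssim\ \int_0^h\int_0^{2a}\rho\,\sqrt{a-\rho/2}\ d\rho\,da.$$
The inner integral evaluates, via the substitution $\rho=2aw$, to a constant multiple of $a^{\frac52}$, whence $\Phi(\lambda)\lesssim\int_0^h a^{\frac52}\,da\lesssim h^{\frac72}=\lambda^{\frac78}$, as required; the hypothesis $\lambda\le 10^{-4}$ guarantees $h\le\tfrac1{10}$, so that the small-parameter approximations are valid.

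I expect the main obstacle to be justifying the change of variables and, relatedly, pinning down the constraint region exactly. The naive heuristic ``both points are near the boundary'' only yields $\Phi\lesssim h^2$, since each $w_i$ lies in an annulus of width $O(h)$; this is too weak. The improvement to $h^{\frac72}$ comes from two extra half-powers of smallness contributed by the near-radial alignment of $w_1$ and $w_2$: the factor $\rho\,d\rho$ integrates to $O(a^2)$ because $\rho\le 2a$, while the tangential spread contributes only $|s|\lesssim\sqrt a$ because a chord at depth $a$ has half-length $\Theta(\sqrt a)$. Keeping these dependencies coupled, rather than bounding each point's location independently, is the crux of the argument.
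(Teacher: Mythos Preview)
Your argument is correct and reaches the same $\lambda^{7/8}$ bound, but by a genuinely different route than the paper. The paper works in separate polar coordinates $w_1=(r,\varphi)$, $w_2=(\xi,\theta)$, uses rotational symmetry to set $\varphi=0$, then proceeds through a chain of reductions: a monotonicity step $\delta((r,0),(\xi,\theta))\ge\delta((r,0),(1,\theta))$ to push $w_2$ to the boundary, an argument that $r\ge 1-2\lambda^{1/4}$ (which together with $\int(1-r)r\,dr$ produces the $\lambda^{1/2}$ factor), and finally a somewhat delicate inequality showing the angular window is $|\theta|\le 8\lambda^{3/8}$. Your Blaschke--Petkantschin change of variables $(w_1,w_2)\mapsto(\phi,\rho,\tilde d,s)$ instead parametrizes the picture directly by the bisecting chord, so the event $\delta\le h$ becomes the single linear constraint $|\tilde d|\ge 1-h$, and all three smallness contributions (depth $a$, separation $\rho\le 2a$, tangential spread $|s|\lesssim\sqrt{a-\rho/2}$) fall out of the one membership constraint $(\tilde d+\rho/2)^2+s^2\le 1$. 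This is cleaner and more transparent about \emph{why} the exponent is $7/8$: the factor $\rho\,d\rho$ over $[0,2a]$ gives $a^2$, the $s$-integral gives $\sqrt{a}$, and $\int_0^h a^{5/2}\,da$ gives $h^{7/2}$. The paper's approach, by contrast, keeps the two points' coordinates separate and has to work harder to recover the coupling you build in from the start; its advantage is that it avoids the affine--Grassmannian change of variables and stays with elementary one-point-at-a-time estimates.
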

\begin{proof}[Proof of Lemma~\ref{lem: UBBall}]
	Let $\mathcal{H}$ be the hyperplane separating the two Voronoi regions. Since $\mathcal{H}$ 
	contains the midpoint $\frac12 (w_2+w_1)$ and has normal vector $w_2-w_1$, it holds that
	\begin{align}\label{eq:hyperplane}
	\mathcal{H}(w_1,w_2) &= \Big\{ \xi\in \mathbb{R}^2 \ : \ (w_2-w_1)\cdot \left( \xi - \frac{w_2+w_1}{2} \right) = 0 \Big\}
	\end{align}
	By basic algebra, the distance from the origin to $\mathcal{H}$ is 
	$\frac{|\|w_2\|^2-\|w_1\|^2|}{2\sqrt{(w_2-w_1)^2_2+(w_2-w_1)^2_1}}$.
	Hence we have,	
	\begin{align}\label{eq: delta}
	\delta(w_1,w_2)=1- \left(\frac{| \|w_2\|^2-\|w_1\|^2|}{2\sqrt{(w_2-w_1)^2_2+(w_2-w_1)^2_1}} \right) 
	\end{align}
	Note that $\delta(w_1,w_2)=\delta(w_2,w_1)$ because swapping positions of the candidates does not change the sizes of Voronoi regions. 
	It will now be convenient to work with polar coordinates. So denote $w_1=(r,\varphi)$ and $w_2=(\xi,\theta)$. Observe that  
	\begin{align}\label{eq: BPhi1}
	\Phi(\lambda)
	&\ =\ \int \int \mathbb{I}\left[\delta(w_1,w_2)\leq \lambda^{\frac{1}{4}}\right] \, dw_2\,  dw_1 \nonumber \\
	&\ =\ 2\int \int \mathbb{I}\left[\delta(w_1,w_2)\leq \lambda^{\frac{1}{4}},\|w_1\|\leq\|w_2\| \right] \, dw_2\,  dw_1 \nonumber \\ 
	&\ =\ 2 \int_{-\pi}^{\pi} \int_0^1 \int \mathbb{I}\left[\delta((r,\varphi),w_2)\leq \lambda^{\frac{1}{4}},r\leq\|w_2\|\right] \, dw_2 \, r\, dr \, d\varphi  \nonumber \\ 
	&\ =\ 4\pi \int_0^1 \int  \mathbb{I}\left[\delta((r,0),w_2)\leq \lambda^{\frac{1}{4}}, r\leq \|w_2\|\right] \, dw_2 \, r \, dr \nonumber  
		\end{align}
	\begin{align}
	&\ =\ 4\pi \int_0^1  \int_r^1 \int_{-\pi}^\pi \mathbb{I}\left[\delta((r,0),(\xi ,\theta))\leq \lambda^{\frac{1}{4}}\right] \, \xi \, d\theta\,  d\xi   \, r\,  dr   \nonumber \\ 
	&\ \leq\ 4\pi \int_0^1  \int_r^1 \int_{-\pi}^\pi \mathbb{I}\left[\delta((r,0),(1 ,\theta))\leq \lambda^{\frac{1}{4}}\right]  \, \xi \,  d\theta \,  d\xi \, r \, dr \nonumber \\
	&\ =\ 4\pi \int_0^1   \int_{-\pi}^\pi \mathbb{I}\left[\delta((r,0),(1 ,\theta))\leq \lambda^{\frac{1}{4}}\right]\left(\frac{1-r^2}{2}\right) \, r \, d\theta \, dr 
	\end{align}
	The first equality follows as $\delta(w_1,w_2)=\delta(w_2,w_1)$. 
	The fourth equality holds by rotational symmetry. 
	The inequality holds because, by equation~(\ref{eq: delta}), we have 
	\begin{align*}
	\delta((r,0),(\xi ,\theta))
	&\ =\ 1- \left(\frac{\xi^2-r^2}{2} \right) \frac{1}{\sqrt{\xi^2 \sin^2(\theta)+(\xi \cos(\theta)-r)^2}} \\
	&\ =\ 1- \left(\frac{\xi^2-r^2}{2} \right) \frac{1}{\sqrt{\xi^2+r^2-2r\xi\cos(\theta)}} \\
	&\ \geq\ 1- \left(\frac{1-r^2}{2} \right) \frac{1}{\sqrt{1+r^2-2r\cos(\theta)}} \\ 
	&\ =\  \delta((r,0), (1,\theta)) 
	\end{align*}
	Note that if $r<1-2\lambda^\frac{1}{4}$ then we have $\delta((r,0),( 1,\theta ))\geq \delta((r,0),( 1,0 )) =  \frac{1-r}{2}>\lambda^\frac{1}{4}$. Hence we see 
	\begin{align}\label{eq:BPhi2}
	\int_0^1   \int_{-\pi}^\pi \mathbb{I}&\left[\delta((r,0),(1 ,\theta))\leq \lambda^{\frac{1}{4}}\right]\left(\frac{1-r^2}{2}\right) \, r \, d\theta \, dr  \nonumber \\
	&\ =\ \int_{1-2\lambda^\frac{1}{4}}^1   \int_{-\pi}^\pi \mathbb{I}\left[\delta((r,0),(1 ,\theta))\leq \lambda^{\frac{1}{4}}\right]\left(\frac{1-r^2}{2}\right) \, r \, d\theta \, dr \nonumber \\ 
	&\ \lesssim\  \lambda^\frac{1}{2} \max\limits_{r\in [1-2\lambda^\frac{1}{4},1]}\int_{-\pi}^\pi \mathbb{I}\left[\delta((r,0),(1 ,\theta))\leq \lambda^{\frac{1}{4}}\right] d\theta 
	\end{align}
	Here the last inequality hold as $\int_{1-2\lambda^\frac{1}{4}}^1 \left(\frac{1-r^2}{2}\right)r dr \leq \int_{1-2\lambda^\frac{1}{4}}^1 \left(1-r\right) dr =2\lambda^{\frac{1}{2}}$. 
	Finally, note that 
	\begin{align}\label{eq: BPhi3}
	\max\limits_{r\in [1-2\lambda^\frac{1}{4},1]}\  & \int_{-\pi}^\pi \mathbb{I}\left[\delta((r,0),(1,\theta))\leq \lambda^{\frac{1}{4}}\right] \, d\theta \nonumber \\
	&\ =\ \max\limits_{r\in [1-2\lambda^\frac{1}{4},1]}\int_{-\pi}^\pi \mathbb{I}\left[1- \left(\frac{1-r^2}{2} \right) \frac{1}{\sqrt{1+r^2-2r\cos(\theta)}}\leq \lambda^{\frac{1}{4}}\right] \, d\theta \nonumber \\ 
	&\ \leq\ \max\limits_{r\in [1-2\lambda^\frac{1}{4},1]}\int_{-\pi}^\pi \mathbb{I}\left[1- \left(\frac{1-r^2}{2} \right) \frac{1}{\sqrt{1+r^2-2r(1- \frac{\theta^2}{8})}}\leq \lambda^{\frac{1}{4}}\right] \, d\theta \nonumber\\ 
	&\ =\ \max\limits_{r\in [1-2\lambda^\frac{1}{4},1]}\int_{-\pi}^\pi \mathbb{I}\left[\sqrt{1+r^2-2r(1- \frac{\theta^2}{8})} \leq \frac{1-r^2}{2(1-\lambda^{\frac{1}{4}})} \right] \, d\theta \nonumber\\ 
	&\ =\ \max\limits_{r\in [1-2\lambda^\frac{1}{4},1]}\int_{-\pi}^\pi \mathbb{I}\left[\sqrt{(1-r)^2+\frac{r\theta^2}{4}} \leq \frac{1-r^2}{2(1-\lambda^{\frac{1}{4}})} \right] \, d\theta \nonumber\\
	&\ =\ \max\limits_{r\in [1-2\lambda^\frac{1}{4},1]}\int_{-\pi}^\pi \mathbb{I}\left[\sqrt{1+\frac{r\theta^2}{4(1-r)^2}}\leq \frac{1+r}{2(1-\lambda^\frac{1}{4})} \right] \, d\theta 
	\end{align}
	Now, if $|\theta|\geq 8\lambda^{\frac{3}{8}}$ then for any $r\in [1-2\lambda^\frac{1}{4},1]$, we have 
	\begin{align*}
	\sqrt{1+\frac{r\theta^2}{4(1-r)^2}}- \frac{1+r}{2(1-\lambda^\frac{1}{4})}
	&\ \geq\ \sqrt{1+\frac{16r\lambda^\frac{3}{4}}{(1-r)^2}}- \frac{1+r}{2(1-\lambda^\frac{1}{4})} \\
	&\ \geq\ \sqrt{1+\frac{16r\lambda^\frac{3}{4}}{4\lambda^\frac{1}{2}}}- \frac{1+r}{2(1-\lambda^\frac{1}{4})} \\
	&\ \geq\ \sqrt{1+4r\lambda^\frac{1}{4}}- \frac{1}{(1-\lambda^\frac{1}{4})}\\
	&\ \geq\ \sqrt{1+4(1-2\lambda^\frac{1}{4})\lambda^\frac{1}{4}}- \frac{1}{(1-\lambda^\frac{1}{4})}\\ 
	&\ >\  0
	\end{align*}
	Here the last inequality holds for any $\lambda^{\frac{1}{4}}\leq \frac{1}{10}$.  Hence,
	\begin{align}
	\max\limits_{r\in [1-2\lambda^\frac{1}{4},1]}&\int_{-\pi}^\pi \mathbb{I}\left[\sqrt{1+\frac{r\theta^2}{4(1-r)^2}}\leq \frac{1+r}{2(1-\lambda^\frac{1}{4})} \right]d\theta
	\nonumber \\
	&\leq \max\limits_{r\in [1-2\lambda^\frac{1}{4},1]}\int_{-8\lambda^{\frac{3}{8}}}^{8\lambda^{\frac{3}{8}}} \mathbb{I}\left[\sqrt{1+\frac{r\theta^2}{4(1-r)^2}}\leq \frac{1+r}{2(1-\lambda^\frac{1}{4})} \right]d\theta \nonumber \\
	&\leq 16 \lambda^{\frac{3}{8}} \label{eq: BPhi4}
	\end{align}
	Finally, combining (\ref{eq: BPhi1}), (\ref{eq:BPhi2}), (\ref{eq: BPhi3}) and (\ref{eq: BPhi4}), we have, for all $\lambda^\frac{1}{4}\leq 0.1$, 
	that 
	\begin{align*}
	\Phi(\lambda)&\ \lesssim\ \lambda^\frac{1}{2}\cdot  \lambda^{\frac{3}{8}} \ =\  \lambda^\frac{7}{8} \qedhere
	\end{align*}
\end{proof}

\begin{theorem}
For the unit ball $\mathbb{B}$, the expected group size after $T$ rounds is $\E[|G^{T}|] \lesssim T^{\frac{1}{8}}$.
\end{theorem}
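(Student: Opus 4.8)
The plan is to combine Corollary~\ref{cor: MainBound} with the estimate $\Phi(\lambda)\lesssim \lambda^{7/8}$ from Lemma~\ref{lem: UBBall}, and then sum the resulting per-round bound. First I would fix the function $f_{\mathbb{B}}(w_1,w_2)=\delta(w_1,w_2)^4$ (up to a suitable constant), which satisfies the hypotheses of Corollary~\ref{cor: MainBound}: by the discussion preceding Lemma~\ref{lem: UBBall} the minimizing Voronoi region is the circular segment, and by Lemma~\ref{lem: BallCap} its conditional acceptance probability is $\gtrsim\delta^4$. Substituting $\Phi(\lambda)\lesssim\lambda^{7/8}$ into Corollary~\ref{cor: MainBound} then yields
\[
\Pr[X^{t+1}]\ \lesssim\ \frac{1}{t}+\int_0^{\frac{\ln t}{t}} t\,e^{-t\lambda}\,\lambda^{7/8}\,d\lambda .
\]

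The heart of the argument is evaluating this integral. I would substitute $u=t\lambda$, so that $d\lambda=du/t$ and $\lambda^{7/8}=u^{7/8}t^{-7/8}$, transforming the integral into
\[
t^{-7/8}\int_0^{\ln t} e^{-u}u^{7/8}\,du\ \le\ t^{-7/8}\int_0^\infty e^{-u}u^{7/8}\,du\ =\ \Gamma\!\left(\tfrac{15}{8}\right)\cdot t^{-7/8},
\]
a constant multiple of $t^{-7/8}$. Since $t^{-7/8}\ge 1/t$, this gives $\Pr[X^{t+1}]\lesssim t^{-7/8}$ for every round. Finally, recalling that $\E[\,|G^T|\,]=\sum_{t=1}^T\Pr[X^t]$, I would bound the sum by an integral, $\sum_{t=1}^T t^{-7/8}\lesssim\int_1^T t^{-7/8}\,dt\lesssim T^{1/8}$, which delivers the claimed upper bound.

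The main obstacle I anticipate is a domain issue rather than a computational one: Lemma~\ref{lem: UBBall} only guarantees $\Phi(\lambda)\lesssim\lambda^{7/8}$ for $\lambda\le 10^{-4}$, so the substitution above is legitimate only once $\ln t/t\le 10^{-4}$. I would handle the finitely many small rounds separately, each contributing $\Pr[X^{t+1}]\le 1$ for a bounded total that is absorbed into the implicit constant, and apply the integral estimate only for large $t$. A secondary point to verify is that the constant hidden in the choice $f_{\mathbb{B}}=\delta^4$ (arising from the $\gtrsim$ in Lemma~\ref{lem: BallCap}) is harmless: it merely rescales the exponent in $e^{-t f_{\mathbb{B}}}$ and reappears as a constant factor after the substitution $u=t\lambda$, leaving the exponent $\tfrac18$ unchanged.
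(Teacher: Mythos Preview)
Your proposal is correct and follows essentially the same route as the paper: choose $f_{\mathbb{B}}=\delta^4$, plug Lemma~\ref{lem: UBBall} into Corollary~\ref{cor: MainBound}, substitute $u=t\lambda$ to reduce the integral to a constant times $t^{-7/8}$, and sum. Your handling of the domain restriction $\lambda\le 10^{-4}$ by absorbing the finitely many early rounds into the implicit constant is exactly what the paper does as well.
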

\begin{proof} 		

Let $t_0$ be a constant such that such that $\frac{\ln(t_0)}{t_0}\leq \frac{1}{10^4}$, then for any round $t\geq t_0$ we have $0\leq \lambda\leq \frac{\ln(t)}{t}\leq \frac{\ln(t_0)}{t_0}\leq \frac{1}{10^4}$. Thus applying Corollary~\ref{cor: MainBound} along with Lemma~\ref{lem: UBBall}  for $t\geq t_0$, we see that
 	\begin{align*}
	\Pr[X^{t+1}]
	 \lesssim  \frac{1}{t}+\int_0^{\frac{\ln(t)}{t}} te^{-t\lambda}  \lambda^{\frac{7}{8}}d\lambda 
	 = \frac{1}{t}+\frac{1}{t^\frac{7}{8}} \int_0^{\ln(t)} e^{-u} \cdot u^{\frac{7}{8}}du
	\lesssim \frac{1}{t^\frac{7}{8}}
	\end{align*}   
Here the equality holds by the substitution $u=t\lambda$.  The last inequality holds since 
$$\int_0^{\ln(t)} e^{-u}u^{\frac{7}{8}}\, du \ \leq\  1+\int_1^{\infty} e^{-u}u \, du  \ \leq\  2$$ 
The theorem follows as $\E[|G^{T}|]=\sum_{t=1}^T  \Pr[X^{t}] \lesssim t_0+ \sum_{t=t_0+1}^T 1/t^{\frac{7}{8}} \lesssim T^\frac{1}{8}$ 
by applying integral bounds.	
\end{proof}

\subsection{Upper Bound for the Unit Square}
Similar to the unit ball case, we must find an appropriate function $f_\mathbb{H}$ satisfying the 
conditions of Corollary~\ref{cor: MainBound}. For a cap $A$ with $A\subseteq H_i(w_1,w_2)$  by Corollary~\ref{cor:subset},
\begin{align}\label{eq: subcap}
\Pr[\ Z(H_i(w_1,w_2)) \ | \ \overbar{H_i(w_1,w_2)}\ ]  \geq  \Pr[\ Z(A)\ |\  \overbar{H_i(w_1,w_2)}\ ]  \geq  \Pr[\ Z(A)\ |\  \overbar{A}\ ]
\end{align}
Let $a(w_1,w_2)\leq b(w_1,w_2)$ be the two side lengths of the triangular cap of greatest area that fits inside 
both $H_i(w_1,w_2)$. Applying Lemma~\ref{lem: f_sq}, along with~(\ref{eq: subcap}) gives
$$\min\limits_i \Pr[\ Z(H_i(w_1,w_2)) \ | \ \overbar{H_i(w_1,w_2)}\ ] \ \gtrsim \ a(w_1,w_2)^4\cdot \ln\left(e\cdot \frac{b(w_1,w_2)}{a(w_1,w_2)}\right)$$
Thus $f_\mathbb{H}(w_1,w_2)=a(w_1,w_2)^4\ln\left(e\frac{b(w_1,w_2)}{a(w_1,w_2)}\right)$ satisfies the conditions of Corollary~\ref{cor: MainBound}.

\begin{lemma}\label{lem: UBSQ}
	Let $a(w_1,w_2)\leq b(w_1,w_2)$  be the two side lengths of the triangular cap of greatest area that fits inside both $H_i(w_1,w_2)$.  
	Then, for any $\lambda\leq \frac{1}{20}$,   
	$$
	\Phi(\lambda) \ =\   \int \int \mathbb{I}\left[a(w_1,w_2)^4\cdot \ln\left(e\cdot \frac{b(w_1,w_2)}{a(w_1,w_2)}\right)\leq \lambda\right] \, dw_1 \, dw_2 
	\ \lesssim\  \lambda\cdot \ln \left(\ln\left( \frac{1}{\lambda}\right)\right) 
	$$	
\end{lemma}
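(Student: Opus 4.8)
The plan is to mirror the strategy of Lemma~\ref{lem: UBBall}: reduce to a \emph{thin-cap} regime, change variables so that the pair $(w_1,w_2)$ is described by its Voronoi edge (the perpendicular bisector) together with a reflection parameter, and then integrate. Since $b\ge a$ forces $\ln(e\,b/a)\ge 1$, the constraint $a^4\ln(eb/a)\le\lambda$ already implies $a\le\lambda^{1/4}$; because $\lambda\le\frac1{20}$ this makes the smaller Voronoi region a genuinely thin cap hugging the boundary of $\mathbb{H}$. Such caps come in two shapes --- \emph{corner triangles} and \emph{edge trapezoids} (for the latter $a,b$ are the legs of the largest inscribed triangle, as in Lemma~\ref{lem: f_sq}) --- and I would bound the contribution of each separately, expecting the corner triangles to dominate.

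For the change of variables I would write each pair via its bisector: $w_1=\rho\,n+u\,\tau+s\,n$ and $w_2=\rho\,n+u\,\tau-s\,n$, where $n=(\cos\psi,\sin\psi)$ is the unit normal of the bisector $\ell$, $\tau$ is the along-line direction, $\rho$ is the distance from a fixed corner to $\ell$, $u$ is the position along $\ell$, and $2s=\|w_1-w_2\|$. A direct Jacobian computation gives $dw_1\,dw_2=4|s|\,d\rho\,d\psi\,du\,ds$. For a fixed corner (say the origin) the candidate $w_1$ ranges over the triangular cap, so $s\in(0,\rho)$ and, at perpendicular depth $s$, the along-line extent is $\sqrt{a^2+b^2}\,(1-s/\rho)$, where $a,b$ are the legs. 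Integrating out $(u,s)$ leaves a density proportional to $\sqrt{a^2+b^2}\,\rho^2$ on lines; reparametrising lines by the leg lengths through $\rho=ab/\sqrt{a^2+b^2}$ and $\psi=\arctan(b/a)$ (whose Jacobian is $ab/(a^2+b^2)^{3/2}$) collapses the corner contribution to the clean density $\frac{a^3b^3}{(a^2+b^2)^2}\,da\,db$.

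The heart of the argument --- and the step I expect to be the main obstacle --- is extracting the $\ln\ln(1/\lambda)$ factor from $\int\!\!\int_{a^4\ln(eb/a)\le\lambda}\frac{a^3b^3}{(a^2+b^2)^2}\,da\,db$. On the region $a\le b$ the density is comparable to $a^3/b$, so for fixed $a$ I would integrate $\int_a^{b_{\max}(a)}\frac{db}{b}$ with $b_{\max}(a)=\min\!\big(1,\tfrac{a}{e}\,e^{\lambda/a^4}\big)$. For $a\le a^\ast$ with $a^\ast\cong(\lambda/\ln\tfrac1\lambda)^{1/4}$ one has $b_{\max}=1$, and this band contributes $\int_0^{a^\ast} a^3\ln\tfrac1a\,da\cong\lambda$. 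For $a\in(a^\ast,\lambda^{1/4}]$ the aspect ratio is truncated, $\int_a^{b_{\max}}\frac{db}{b}=\frac{\lambda}{a^4}-1$, so this band contributes $\cong\lambda\int_{a^\ast}^{\lambda^{1/4}}\frac{da}{a}=\tfrac\lambda4\ln\!\big(\ln\tfrac1\lambda\big)$. Thus the $\ln\ln$ term arises precisely from the narrow window of short-leg lengths on which the admissible aspect ratio $b/a$ ranges over $[1,\,e^{\lambda/a^4}]$; carrying the logarithm honestly through this window, rather than crudely bounding it, is the delicate point.

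Finally I would verify that edge trapezoids are negligible. For a thin sliver of height $\bar h$ along an edge the largest inscribed triangle has its long leg $b\cong 1$ pinned, so the constraint reads $\bar h^4\ln(e/\bar h)\le\lambda$ with no free aspect ratio to integrate; parametrising the bisector by its two edge-crossing heights and integrating the analogous fiber factor $|s|$ yields a contribution of order $(\lambda/\ln\tfrac1\lambda)\ll\lambda\ln\ln\tfrac1\lambda$. Summing over the four corners and four edges, and noting that dropping the constraint $w_2\in\mathbb{H}$ only enlarges the domain (so the estimate remains a valid upper bound, using Corollary~\ref{cor:subset}/Lemma~\ref{lem: BndCap} as in the corner case), gives $\Phi(\lambda)\lesssim\lambda\ln\ln(1/\lambda)$, as required.
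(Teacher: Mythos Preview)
Your approach is correct and genuinely different from the paper's. The paper never changes variables to the bisector; instead it fixes $w_1=(x,y)$, writes $w_2=(x+\Delta_x,y+\Delta_y)$ with slope $s=\Delta_y/\Delta_x\in[0,1]$, and uses the explicit intercept formula $a=x+sy+\tfrac12\Delta_x(1+s^2)$ together with $b=\min(1,a/s)$ (this $\min$ absorbing the triangle/trapezoid distinction in one stroke) to trap $(\Delta_x,\Delta_y)$ in a small rectangle. The $\ln\ln(1/\lambda)$ then emerges from $\int_{\lambda^{1/4}}^{1}\frac{dy}{y\,\ln(ey/\lambda^{1/4})}$. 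Your bisector--plus--fiber decomposition instead integrates out the reflection data first and lands on the density $\frac{a^{3}b^{3}}{(a^{2}+b^{2})^{2}}\asymp \frac{a^{3}}{b}$ over leg lengths; the same $\ln\ln(1/\lambda)$ now appears from $\lambda\int_{a^{\ast}}^{\lambda^{1/4}}\frac{da}{a}$ on the aspect-ratio-truncated window. Your route is more geometric and makes the role of $(a,b)$ transparent, at the price of two Jacobian computations; the paper's is more hands-on and avoids these, at the price of carrying four Cartesian coordinates through the estimate and never isolating the $(a,b)$-integral explicitly.

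Two small points to tighten. For an edge trapezoid with crossing heights $h_0,h_1$, the short leg of the maximal inscribed triangular cap is $\max(h_0,h_1)$ (the triangle seated at the \emph{taller} corner fits under the bisector), so your $\bar h$ must be this maximum; if $\bar h=\min(h_0,h_1)$ then one height is left unconstrained and the contribution is no longer $O(\lambda/\ln\tfrac{1}{\lambda})$. And the closing appeal to Corollary~\ref{cor:subset}/Lemma~\ref{lem: BndCap} is misplaced: those concern acceptance probabilities, whereas here you only need monotonicity of the Lebesgue integral when you drop the constraint that the \emph{reflected} point (your $w_1$, not $w_2$) lies in $\mathbb{H}$.
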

\begin{proof}
	We need only consider pairs $w_1$ and $w_2$ that satisfy the indicator function.
	As $\lambda\leq \frac{1}{20}$, this implies $a(w_1,w_2)\leq \lambda^{\frac14}\le \left(\frac{1}{20}\right)^\frac{1}{4}\leq \frac{1}{2}$ and,
	without loss of generality, $H_1(w_1,w_2)$ is the smallest Voronoi region and fits (under symmetries) into $H_2(w_1,w_2)$.
	Furthermore, applying rotational and diagonal symmetries, any pair of points can be transformed into a pair of the form
	$w_1=(x,y)$ and $w_2=(x+\Delta_x,y+\Delta_y)$, with $s=\Delta_y/\Delta_x\leq 1$ and $\Delta_x,\Delta_y\geq 0$. 
	Hence, we lose only a constant factor in making the following assumptions on $w_1$ and $w_2$: the triangular cap of greatest area that fits inside both the $H_i(w_1,w_2)$ 
	is contained in $H_1(w_1,w_2)$; the cap contains the origin; the larger side corresponding 
	to $b(w_1,w_2)$ is along the $y$-axis; the smaller side corresponding to $a(w_1,w_2)$ is along the $x$-axis. 

	Recall that $H_1(w_1,w_2)$ is either a right-angled triangle or a right-angled trapezoid.
	In the former case, the triangular cap of greatest area which fits inside both of the Voronoi regions is $H_1$ itself. The side lengths of $H_1$ are then the intercepts of $\mathcal{H}$ along the axes, where $\mathcal{H}$ is the hyperplane separating the two Voronoi regions. In the latter case, the triangular cap of greatest area satisfies $b(w_1,w_2)=1$ and $a(w_1,w_2)$ is the intercept of $\mathcal{H}$ on the $x$-axis.
	We can then compute explicit expressions for both terms $a(w_1,w_2)$ and $b(w_1,w_2)$. In particular, 
	\begin{align}
	a(w_1,w_2)=\frac{\|w_2 \|^2-\|w_1 \|^2}{2(w_2-w_1)_1} 
	=x+sy + \frac{\Delta_x}{2} (1+s^2) 
	= x+sy + \frac{\Delta_y}{2}\left(\frac{1+s^2}{s}\right) \label{eq: a-intercept}
	\end{align}
	The first equality holds by definition (\ref{eq:hyperplane}) of hyperplane $\mathcal{H}$. Now, because this is the unit square, we have $b(w_1,w_2)\leq 1$. Thus, $b(w_1,w_2)= \min\left(1,\frac{a(w_1,w_2)}{s}\right)$. Hence, 	
	\begin{align}\label{eq: Amineq}
	a(w_1,w_2)^4 \ln\left(e\cdot\frac{b(w_1,w_2)}{a(w_1,w_2)}\right) 
	\ =\  a(w_1,w_2)^4 \ln\left(e \min\left(\frac{1}{a(w_1,w_2)},\frac{1}{s}\right)\right)
	\end{align}
	For a fixed $w_1=(x,y)$, let $R(x,y)$ be a rectangle containing all the points $w_2=(x+\Delta_x,y+\Delta_y)$ satisfying the condition 
	of the indicator function. Thus, it will suffice to show that we can select $R(x,y)$ to have small area. To do this we must show that $\Delta_x$ and
	$\Delta_y$ cannot be too large.
	Again, recall that if the indicator function is true then $a(w_1,w_2)\leq \lambda^\frac{1}{4}$. So~(\ref{eq: a-intercept}) 
	implies $x\leq \lambda^\frac{1}{4}$ and $s\leq \frac{\lambda^\frac{1}{4}}{y}$. 
	If $\lambda^\frac{1}{4}\leq y\leq1 $ then $\min\left(\frac{1}{a(w_1,w_2)},\frac{1}{s}\right)\geq \frac{y}{\lambda^\frac{1}{4}}$.
	Plugging into~(\ref{eq: Amineq}) gives
	$a(w_1,w_2)^4\cdot \ln\left(e\cdot \frac{b(w_1,w_2)}{a(w_1,w_2)}\right)\geq a(w_1,w_2)^4 \cdot \ln\left(e\cdot \frac{y}{\lambda^\frac{1}{4}}\right) $. It follows that $a(w_1,w_2)\leq \frac{\lambda^{1/4}}{\ln^{1/4}\left(\frac{ey}{\lambda^{1/4}}\right)}$. Therefore, (\ref{eq: a-intercept}) gives: 
	\begin{align}
	\Delta_x&\ \leq\   2\cdot \left(\frac{ \lambda^{\frac{1}{4}}}{\ln^{\frac{1}{4}}\left(\frac{ey}{\lambda^{\frac{1}{4}}}\right)}-ys-x \right)
	\ \ \ \leq\  2\cdot \left(\frac{ \lambda^{\frac{1}{4}}}{\ln^{\frac{1}{4}}\left(\frac{ey}{\lambda^{\frac{1}{4}}}\right)}-x \right)\label{eq:1} \\ 
	\Delta_y&\ \leq\   2s\cdot \left(\frac{ \lambda^{\frac{1}{4}}}{\ln^{\frac{1}{4}}\left(\frac{ey}{\lambda^{\frac{1}{4}}}\right)}-ys-x \right)
	\ \leq\ \frac{1}{2y}\cdot \left(\frac{ \lambda^{\frac{1}{4}}}{\ln^{\frac{1}{4}}\left(\frac{ey}{\lambda^{\frac{1}{4}}}\right)}-x \right)^2 \label{eq:2}
	\end{align}
	The final inequalities in (\ref{eq:1}) and (\ref{eq:2}) were obtained by optimizing over $s\in [0,1]$. Then noting that $x\leq \lambda^\frac{1}{4}$, we have
	\begin{align*}
	\Phi(\lambda)
	&\ \lesssim\ \int_0^1  \int_0^{\lambda^\frac{1}{4}} |R(x,y)|  \, dx \,  dy\\
	&\ =\  \int_0^{\lambda^\frac{1}{4}} \int_0^{\lambda^\frac{1}{4}} |R(x,y)|   \, dx \,  dy + \int_{\lambda^\frac{1}{4}}^1 \int_0^{\lambda^\frac{1}{4}} |R(x,y)|   \, dx \,  dy \\
	&\ \lesssim\ \lambda+ \int_{\lambda^{1/4}}^1 \frac{1}{y}\cdot \int_0^{\frac{ \lambda^{1/4}}{\ln^{1/4}(\frac{ey}{\lambda^{1/4}})}}\, \left(\frac{ \lambda^{1/4}}{\ln^{1/4}(\frac{ey}{\lambda^{1/4}})}-x \right)^3  \, dx  \, dy  \\  
	&\ \lesssim\ \lambda \cdot \int_{\lambda^{1/4}}^1\frac{1}{y} \cdot \frac{1}{\ln(\frac{ey}{\lambda^{1/4}})} \, dy \\
	&\ \lesssim\ \lambda \cdot \ln\left(\ln\left(\frac{1}{\lambda}\right)\right) 
	\end{align*}
	For the second inequality, since $\Delta_x$ must be positive,~(\ref{eq:1}) implies that 
	the limit of the integral becomes $x=\frac{ \lambda^{1/4}}{\ln^{1/4}(\frac{ey}{\lambda^{1/4}})}$. To bound the area $|R(x,y)|$ of 
	the rectangles we have two cases. When $0\leq y\leq \lambda^\frac{1}{4}$, observe, by~(\ref{eq: a-intercept}), that $a(w_1,w_2)\leq \lambda^{\frac{1}{4}}$ implies $\Delta_x\leq 2\lambda^{\frac{1}{4}}$ 
	and $\Delta_y\leq 2\lambda^{\frac{1}{4}}$. Thus $|R(x,y)|\le \Delta_x\cdot \Delta_y\lesssim \sqrt{\lambda}$.
	When $\lambda^\frac{1}{4}\leq y\leq 1$, the bound on $|R(x,y)|$ holds by~(\ref{eq:1}) and~(\ref{eq:2}). 
\end{proof}
	
\begin{theorem}
For the unit square  $\mathbb{H}$, the expected group size after $T$ rounds is $$\E[|G^{T}|] \lesssim \ln T\cdot  \ln \ln T$$
\end{theorem}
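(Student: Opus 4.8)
The plan is to feed the estimate $\Phi(\lambda) \lesssim \lambda \ln\ln(1/\lambda)$ of Lemma~\ref{lem: UBSQ} straight into the master bound of Corollary~\ref{cor: MainBound}, and then reduce the whole statement to two elementary integral estimates. First I would fix a constant $t_0$ large enough that $\frac{\ln t}{t} \le \frac{1}{20}$ for every $t \ge t_0$. This guarantees that throughout the integration range $\lambda \in [0, \frac{\ln t}{t}]$ the hypothesis $\lambda \le \frac{1}{20}$ of Lemma~\ref{lem: UBSQ} is satisfied, so Corollary~\ref{cor: MainBound} yields
$$\Pr[X^{t+1}] \ \lesssim\ \frac{1}{t} + \int_0^{\frac{\ln t}{t}} t e^{-t\lambda}\cdot \lambda \ln\ln\left(\tfrac{1}{\lambda}\right) d\lambda$$
for all $t \ge t_0$. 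The finitely many rounds $t < t_0$ contribute only an additive constant to the final sum and can be discarded.

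The key step is bounding the remaining integral. I would substitute $u = t\lambda$, which converts it into $\frac{1}{t}\int_0^{\ln t} e^{-u}\, u \ln\ln(t/u)\, du$. The only delicate factor is $\ln\ln(t/u)$, and the point of the argument is that the double logarithm grows so slowly that it is essentially $\ln\ln t$ across the whole range. Concretely I would split the $u$-integral at $u=1$. On $[1,\ln t]$ we have $t/u \le t$, hence $\ln\ln(t/u) \le \ln\ln t$, and since $\int_1^\infty e^{-u} u\, du = O(1)$ this piece is $\lesssim \ln\ln t$. On $[0,1]$ I would use the crude bound $\ln\ln(t/u) \le \ln\ln t + \ln(1/u)$, which holds for large $t$ because $\ln(t/u) = \ln t + \ln(1/u) \le \ln t\,(1+\ln(1/u))$ and $\ln(1+\ln(1/u)) \le \ln(1/u)$; then $\int_0^1 u\big(\ln\ln t + \ln(1/u)\big)\,du = \tfrac12 \ln\ln t + \tfrac14 \lesssim \ln\ln t$. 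Combining the two pieces gives $\int_0^{\ln t} e^{-u} u \ln\ln(t/u)\, du \lesssim \ln\ln t$, and therefore $\Pr[X^{t+1}] \lesssim \frac{\ln\ln t}{t}$.

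Finally I would sum over rounds. Using $\E[|G^T|] = \sum_{t=1}^T \Pr[X^t]$ together with the per-round bound gives $\E[|G^T|] \lesssim t_0 + \sum_{t=t_0+1}^T \frac{\ln\ln t}{t}$. Comparing the sum to the integral $\int_{t_0}^T \frac{\ln\ln t}{t}\, dt$ and substituting $s = \ln t$ reduces it to $\int \ln s\, ds = \ln t\,\ln\ln t - \ln t$, so the sum is $\lesssim \ln T\,\ln\ln T$, which is the claimed bound.

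I expect the main obstacle to be the careful handling of $\ln\ln(t/u)$ near $u = 0$: naively one worries that $\ln\ln(t/u) \to \infty$ as $u \to 0$, but the vanishing weight $u e^{-u}$ tames this, and the subadditive split $\ln\ln(t/u) \le \ln\ln t + \ln(1/u)$ on $[0,1]$ makes the estimate clean. Everything else is a routine change of variables and an integral comparison, so the entire proof is essentially an application of the machinery already assembled in Corollary~\ref{cor: MainBound} and Lemma~\ref{lem: UBSQ}.
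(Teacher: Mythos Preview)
Your proposal is correct and follows essentially the same route as the paper: apply Corollary~\ref{cor: MainBound} together with Lemma~\ref{lem: UBSQ}, substitute $u=t\lambda$, split the integral at $u=1$, bound each piece by $O(\ln\ln t)$, and then sum $\frac{\ln\ln t}{t}$ via an integral comparison. The only cosmetic difference is on the interval $[0,1]$: the paper uses the single pointwise bound $u\,\ln\ln(t/u)\le \ln\ln t$ (valid for $t\ge 100$), whereas you use the subadditive split $\ln\ln(t/u)\le \ln\ln t+\ln(1/u)$ and integrate both terms; either works.
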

\begin{proof}
 Note that for any round $t\geq 100$ we have $0\leq \lambda\leq \frac{\ln(t)}{t}\leq \frac{1}{20}$. Thus, combining Corollary~\ref{cor: MainBound} and Lemma~\ref{lem: UBSQ} we get
	\begin{align*}
	\Pr[X^{t+1}]
	&\ \lesssim \ \frac{1}{t}+\int_0^{\frac{\ln(t)}{t}} te^{-t\lambda} \cdot \lambda \ln \left(\ln\left( \frac{1}{\lambda}\right)\right) \, d\lambda \\ 
	&\ =\ \frac{1}{t}+\frac{1}{t}\int_0^{1} u \ln\left(\ln\left(\frac{t}{u}\right)\right)  du + \frac{1}{t}\int_1^{\ln(t) }u e^{-u}  \ln\left(\ln\left(\frac{t}{u}\right)\right)  du
	\end{align*} 
	Here  the equality holds via the substitution $u=t\lambda$. Note that $u \ln\left(\ln\left(\frac{t}{u}\right)\right)\leq \ln \ln t$, when $t\geq 100$ 
	and $0\leq u\leq 1$. Furthermore,
	$$\int_1^{\ln(t) } ue^{-u}\cdot  \ln\left(\ln\left(\frac{t}{u}\right)\right)\, du \ \leq \ \ln \ln t \cdot \int_1^\infty ue^{-u}\, du \ \leq \ \ln \ln t$$
	Hence it follows $\Pr[X^{t+1}]\lesssim \frac{\ln \ln t}{t} $ for all $t\geq 100$.
	Finally, we see that $\E[|G^{T}|]=\sum_{t=1}^T  \Pr[X^{t}] \lesssim 100+ \sum_{t=101}^T \frac{\ln \ln t}{t} \lesssim \ln T\cdot  \ln \ln T$, 
	where the last inequality was obtained using integral bounds.
\end{proof}

\section{Conclusion}
In this paper we presented techniques for studying the evolution of an exclusive social group in a metric space, under the 
consensus voting mechanism. A natural open problem is to close the gap 
between the $\Omega(\ln T)$ lower bound and the $O(\ln T \cdot \ln\ln T )$ upper bound on the expected cardinality of the 
group, after $T$ rounds, in the unit square. Interesting further directions include the study of higher dimensional metric spaces, 
and allowing for more than two candidates per round. In either direction, our analytic tools may prove useful.

\bibliographystyle{plain}
\bibliography{FL}{}	
\end{document}